\newcommand{\be}{\begin{eqnarray} \begin{aligned}}
\newcommand{\ee}{\end{aligned} \end{eqnarray} }
\newcommand{\benn}{\begin{eqnarray*} \begin{aligned}}
\newcommand{\eenn}{\end{aligned} \end{eqnarray*} }
\newtheorem{theorem}{Theorem}
\newtheorem{corollary}{Corollary}
\newcommand{\BEQ}{\begin{equation}}
\newcommand{\EEQ}{\end{equation}}
\newcommand{\BEA}{\begin{eqnarray}}
\newcommand{\EEA}{\end{eqnarray}}
\newcommand{\BGA}{\begin{gather}}
\newcommand{\EGA}{\end{gather}}
\newcommand{\comment}[1]{}
\newcommand{\eps}{\epsilon}
\newcommand{\vx}{{\mathbf x}}
\newcommand{\ii}{{(i)}}
\begin{document}
\setlength{\abovedisplayskip}{0.5mm}
\setlength{\belowdisplayskip}{0.5mm}

\runningtitle{Efficient Estimation of Mutual Information for Strongly Dependent Variables}

\twocolumn[

\aistatstitle{Efficient Estimation of Mutual Information for \\  Strongly Dependent Variables}

\aistatsauthor{ Shuyang Gao \And Greg Ver Steeg \And Aram Galstyan}

\aistatsaddress{  Information Sciences Institute \\ University of Southern California \\ sgao@isi.edu \And Information Sciences Institute \\ University of Southern California \\ gregv@isi.edu 
\And Information Sciences Institute \\ University of Southern California \\ galstyan@isi.edu } ]

\begin{abstract}
We demonstrate that a popular class of non-parametric mutual information (MI) estimators based on $k$-nearest-neighbor graphs requires number of samples that scales exponentially with the true MI. Consequently, accurate estimation of MI between two strongly dependent variables is possible only for prohibitively large sample size. This important yet overlooked shortcoming of the existing estimators is due to their implicit reliance on  local uniformity of the underlying joint distribution. We introduce a new  estimator that is robust to local non-uniformity, works well with limited data, and is able to capture relationship strengths over many orders of magnitude. We demonstrate the superior performance of the proposed estimator on both synthetic and real-world data.
\end{abstract}

\section{Introduction}
\label{sec:intro}
As a measure of the dependence between two random variables, mutual information is remarkably general and has several intuitive interpretations~\citep{cover}, which explains its widespread use in statistics, machine learning, and computational neuroscience; see~\citep{qingwang} for a list of various applications. In a typical scenario, we do not know the underlying joint distribution of the random variables, and instead need to estimate mutual information using i.i.d.\ samples from that distribution. A naive approach to  this problem is to first estimate the underlying probability density, and then  calculate  mutual information using its definition. 
Unfortunately, estimating joint densities from a limited number of samples is often infeasible in many practical settings. 

A different approach is to estimate mutual information directly from samples in a non-parametric way, without ever describing the entire probability density. The main intuition behind such {\em direct} estimators is that evaluating mutual information can in principle be a more tractable problem than estimating the density over the whole state space~\citep{consistency}. The most popular class of estimators taking this approach is the $k$-nearest-neighbor(kNN) based estimators. One example of such estimator is due to Kraskov, St\"{o}gbauer, and Grassberger (referred to as the KSG estimator from here on)~\citep{kraskov}, which has been extended to generalized nearest neighbors graphs~\citep{GNNG}.



Our first contribution is to demonstrate that kNN-based estimators suffer from a critical yet overlooked flaw. Namely, we illustrate that if the true mutual information is $I$, then kNN-based estimators requires exponentially many (in $I$) samples for accurate estimation. This means that \emph{strong relationships} are actually \emph{more difficult} to measure. This counterintuitive property reflects the fact that most work on mutual information estimation has focused on estimators that are good at detecting \emph{independence} of variables rather than precisely measuring strong dependence. In the age of big data, it is often the case that many strong relationships are present in the data and we are interested in picking out only the strongest ones. MI estimation will perform poorly for this task if their accuracy is low in the regime of strong dependence.

We show that the undesired behavior of previous kNN-based MI estimators can be attributed to the assumption of local uniformity utilized by those estimators, which can be violated  for sufficiently strong (almost deterministic) dependencies. As our second major contribution, we suggest a new kNN estimator that relaxes the local uniformity condition by introducing a correction term for local non-uniformity. We demonstrate empirically that for strong relationships, the proposed estimator needs significantly fewer samples for accurately estimating mutual information.

In Sec.~\ref{sec:methods}, we introduce kNN-based non-parametric entropy and mutual information estimators. In Sec.~\ref{sec:lim}, we demonstrate the limitations of kNN-based mutual information estimators. And then we suggest a correction term to overcome these limitations in Sec.~\ref{sec:corr}. In Sec.~\ref{sec:results}, we show empirically using synthetic and real-world data that our method outperforms existing techniques. In particular, we are able to accurately measure strong relationships using smaller sample sizes. Finally, we conclude with related work and discussion.

\section{kNN-based Estimation of Entropic Measures}\label{sec:methods}

In this section, we will first introduce the naive kNN estimator for mutual information, which is based on an entropy estimator due to~\citep{kNN_naive}, and show its theoretical properties. Next, we focus on a popular variant of kNN estimators, called KSG estimator~\citep{kraskov}.

\subsection{Basic Definitions}
\label{sec:problem}

Let $\vx=(x_1,x_2,...,x_d)$ denote a $d$-dimensional absolute continuous random variable whose probability density function is defined as $p: \mathbb R^{d} \to \mathbb R$ and marginal densities of each $x_j$ are defined as $p_j: \mathbb R \to \mathbb R, j=1,\ldots,d$. Shannon Differential Entropy and Mutual Information are defined in the usual way:
\begin{eqnarray}
H\left( \vx \right) &=& -\int\limits_{{{\mathbb R}^d}} {p\left( \vx \right)\log p\left( \vx \right)d\vx}\\
I\left( \vx \right) &=& \int\limits_{{{\mathbb R}^d}} {p\left( \vx \right)\log \frac{{p\left( \vx \right)}}{{\prod\limits_{j = 1}^d {{p_j}\left( {{x_j}} \right)} }}d\vx} 
\end{eqnarray}
We use natural logarithms so that information is measured in nats. For $d>2$, the generalized mutual information is also called \textit{total correlation}~\citep{watanabe1960} or \textit{multi-information}~\citep{studeny1998}.

Given  $N$ i.i.d.\ samples, $\mathcal{X}=\left\{ \vx^\ii \right\}_{i = 1}^n$, drawn from $p(\vx)$, the task is to construct a mutual information estimator $\hat I\left( \vx \right)$ based on these samples. 

\subsection{Naive kNN Estimator}
\paragraph{Entropy Estimation}
The naive kNN entropy estimator is as follows:
\be\label{eq:kNN_h_old}
{\widehat H'_{kNN,k}}\left( {\bf{x}} \right) =  - \frac{1}{n}\sum\limits_{i = 1}^n {\log {{\widehat p}_k}\left( {{{\bf{x}}^\ii}} \right)}
\ee
where 
\be\label{eq:kNN_h_old_p}
{\widehat p_k}\left( {{{\bf{x}}^\ii}} \right) = \frac{k}{{n - 1}}\frac{{\Gamma \left( {d/2 + 1} \right)}}{{{\pi ^{d/2}}}}{r_k}{\left( {{{\bf{x}}^\ii}} \right)^{ - d}}
\ee
${r_k}\left( {{{\bf{x}}^\ii}} \right)$ in Eq.~\ref{eq:kNN_h_old_p} is the Euclidean distance from $\vx^\ii$ to its $k$th nearest neighbor in $\mathcal{X}$. By introducing a correction term, an asymptotic unbiased estimator is obtained:
\be\label{eq:kNN_h}
{\widehat H_{kNN,k}}\left( {\bf{x}} \right) =  - \frac{1}{n}\sum\limits_{i = 1}^n {\log {{\widehat p}_k}\left( {{{\bf{x}}^\ii}} \right)} - \gamma_k
\ee
where 
\BEQ
{\gamma _k} =  \frac{{{k^k}}}{{\left( {k - 1} \right)!}}\int_0^\infty  {\log \left( x \right){x^{k - 1}}{e^{ - kx}}dx} = \psi(k)-\log(k) \\
\EEQ
where $\psi(\cdot)$ represents the digamma function. 

The following theorem shows asymptotic unbiasedness of ${\widehat H_{kNN,k}}\left( {\bf{x}} \right)$ according to~\citep{kNN_naive}. 
\begin{theorem}[kNN entropy estimator, asymptotic unbiasedness,~\citep{kNN_naive}]\label{theo:kNN_ent}
Assume that $\vx$ is absolutely continuous and $k$ is a positive integer, then 
\be
{\lim _{n \to \infty }}\mathbb{E}\left[ {{{\widehat H}_{kNN,k}}\left( {\mathbf{x}} \right)} \right] = {H_{kNN,k}(\vx)}
\ee
i.e., this entropy estimator is asymptotically unbiased.
\end{theorem}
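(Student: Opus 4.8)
The plan is to reduce the statement to a computation about the probability mass enclosed by the $k$-nearest-neighbor ball, following the Kozachenko--Leonenko argument behind the cited result. By linearity of expectation and exchangeability of the i.i.d.\ sample, $\mathbb{E}[\widehat H_{kNN,k}(\vx)] = -\mathbb{E}[\log \widehat p_k(\vx^{(1)})] - \gamma_k$, so it suffices to track $\mathbb{E}[\log \widehat p_k(\vx^{(1)})]$ as $n\to\infty$. Conditioning on $\vx^{(1)}=x$, let $U_n = \int_{B(x,\,r_k(x))} p(\vy)\,d\vy$ be the $p$-measure of the ball centered at $x$ whose radius is the distance to its $k$th nearest neighbor among the remaining $n-1$ samples. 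Since the balls $B(x,r)$ are nested and increasing in $r$, the event $\{r_k(x)\le r\}$ coincides with the event that at least $k$ of the $n-1$ points fall in $B(x,r)$; a binomial order-statistics computation then identifies the law of $U_n$ as $\mathrm{Beta}(k,n-k)$, and this law does \emph{not} depend on $x$. In particular $\mathbb{E}[\log U_n] = \psi(k)-\psi(n)$ by the standard formula for the log-moment of a Beta variate.

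Next I would invoke the local-uniformity / Lebesgue-point step, which is where absolute continuity of $\vx$ enters. Writing $V_d = \pi^{d/2}/\Gamma(d/2+1)$ for the volume of the unit $d$-ball, Eq.~\ref{eq:kNN_h_old_p} gives $\log \widehat p_k(\vx^{(1)}) = \log\frac{k}{n-1} - \log\bigl(V_d\, r_k(\vx^{(1)})^{d}\bigr)$, and I decompose
\[
-\log\bigl(V_d r_k(\vx^{(1)})^{d}\bigr) = -\log U_n + \log p(\vx^{(1)}) + \log\!\frac{U_n}{V_d\, r_k(\vx^{(1)})^{d}\, p(\vx^{(1)})}.
\]
Since $p$ is integrable, Lebesgue's differentiation theorem gives, for a.e.\ $x$, that $\int_{B(x,\epsilon)} p / (V_d\epsilon^d) \to p(x)$ as $\epsilon\to 0$; combined with $r_k(\vx^{(1)})\to 0$ as $n\to\infty$ (which follows from $U_n\sim\mathrm{Beta}(k,n-k)$ concentrating at $0$), the last term tends to $0$. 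Taking expectations and using $\mathbb{E}[\log U_n]=\psi(k)-\psi(n)$,
\[
\mathbb{E}\bigl[\log \widehat p_k(\vx^{(1)})\bigr] = \log k - \log(n-1) - \psi(k) + \psi(n) + \mathbb{E}\bigl[\log p(\vx)\bigr] + o(1).
\]

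Finally I would assemble the pieces: since $\psi(n) - \log(n-1)\to 0$ as $n\to\infty$ and $\gamma_k = \psi(k)-\log k$, the right-hand side converges to $-\gamma_k + \mathbb{E}[\log p(\vx)]$, hence $\mathbb{E}[\widehat H_{kNN,k}(\vx)] = -\mathbb{E}[\log \widehat p_k(\vx^{(1)})] - \gamma_k \to -\mathbb{E}[\log p(\vx)] = H(\vx)$, the asserted limit. The main obstacle is making the $o(1)$ rigorous, i.e.\ justifying the exchange of limit and expectation for the error term $\log\bigl(U_n / (V_d r_k(\vx^{(1)})^{d} p(\vx^{(1)}))\bigr)$. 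Pointwise convergence at Lebesgue points is immediate, but uniform integrability needs a dominating bound: the ratio can misbehave for $x$ near $\partial\,\mathrm{supp}(p)$ (a small ball can carry much less mass than $V_d\epsilon^d p(x)$), for $x$ where $p$ is extremely small or large, and one must also exclude the possibility that $r_k(\vx^{(1)})$ stays bounded away from $0$ on a non-negligible set. Controlling these requires the regularity and moment hypotheses accompanying the cited theorem (e.g.\ finiteness of $\int p|\log p|$ together with mild tail control on $p$); granting those, the interchange is valid and the chain of displays above yields the result. A secondary technical point is the precise $\mathrm{Beta}(k,n-k)$ identification of $U_n$, which needs a (measure-zero) ties argument under absolute continuity but is otherwise the classical computation.
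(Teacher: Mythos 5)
The paper does not prove this theorem---it is imported verbatim from Singh et al.\ (2003) (the \citep{kNN_naive} reference), so there is no in-paper argument to compare against. Your sketch is the standard Kozachenko--Leonenko-style proof underlying that cited result, and the computational skeleton is correct: the probability-integral-transform argument giving $U_n\sim\mathrm{Beta}(k,n-k)$ independently of $x$, the identity $\mathbb{E}[\log U_n]=\psi(k)-\psi(n)$, the three-term decomposition of $-\log(V_d r_k^d)$, and the final cancellation against $\gamma_k=\psi(k)-\log k$ all check out and deliver $-\mathbb{E}[\log p(\vx)]=H(\vx)$ (note the theorem's right-hand side ``$H_{kNN,k}(\vx)$'' is a typo for $H(\vx)$; your derivation produces the intended quantity). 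You are also right, and commendably explicit, that the one genuine gap is the interchange of limit and expectation for the Lebesgue-point error term $\log\bigl(U_n/(V_d r_k^d\, p(\vx))\bigr)$: bare absolute continuity, as stated in the theorem, does not suffice for uniform integrability near the boundary of the support or where $p$ is very small, and the cited source closes this with additional moment/tail conditions (finiteness of $\int p|\log p|$ and related bounds). Since the paper itself states the theorem under the weaker hypothesis and defers entirely to the citation, your proposal is as complete as the paper's treatment and identifies exactly where the extra hypotheses are consumed.
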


\paragraph{From Entropy to Mutual Information} To construct a mutual information estimator from an entropy estimator is straightforward by combining entropy estimators using the identity~\citep{cover}:
\be\label{eq:mie}
I(\vx) = \sum\limits_{i = 1}^d {H\left( {{x_i}} \right)}  - H\left( \vx \right)
\ee
Combining Eqs.~\ref{eq:kNN_h_old} and~\ref{eq:mie}, we have,
\BEQ\label{eq:kNN_mi}
{\widehat I'_{kNN,k}}\left( {\mathbf{x}} \right) = \frac{1}{n}\sum\limits_{i = 1}^n {\log \frac{{{{\widehat p}_k}\left( {{{\mathbf{x}}^\ii}} \right)}}{{{{\widehat p}_k}\left( {{{\mathbf{x}}^\ii_1}} \right){{\widehat p}_k}\left( {{{\mathbf{x}}^\ii_2}} \right)...{{\widehat p}_k}\left( {{{\mathbf{x}}^\ii_d}} \right)}}} \\
\EEQ
where $\vx^\ii_j$ denotes the point projected into $j$th dimension in $\vx^\ii$ and ${\widehat p}_k(\vx^\ii_j)$ represents the marginal kNN density estimator projected into $j$th dimension of $\vx^\ii$.

Similar to Eq.~\ref{eq:kNN_h}, we can also construct an asymptotically unbiased mutual information estimator based on Theorem~\ref{theo:kNN_ent}:
\be
{\widehat I_{kNN,k}} = {\widehat I'_{kNN,k}}-\left( {d - 1} \right){\gamma _k}
\ee
\begin{corollary}[kNN MI estimator, asymptotic unbiasedness] 
Assume that $\vx$ is absolute continuous and $k$ is a positive integer, then:
\be
{\lim _{n \to \infty }}\mathbb{E}\left[ {{{\widehat I}_{kNN,k}}\left( {\mathbf{x}} \right)} \right] = {I_{kNN,k}}\left( {\mathbf{x}} \right)
\ee
\end{corollary}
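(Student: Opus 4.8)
The plan is to reduce the statement directly to Theorem~\ref{theo:kNN_ent} by writing $\widehat{I}_{kNN,k}$ as a \emph{finite} linear combination of naive kNN entropy estimators: one applied to the joint sample and one to each of the $d$ one-dimensional marginal samples. First I would use $\log\frac{a}{b_1\cdots b_d}=\log a-\sum_{j=1}^d\log b_j$ inside Eq.~\ref{eq:kNN_mi} to get
\[
\widehat{I}'_{kNN,k}(\vx)=\frac{1}{n}\sum_{i=1}^n\log\widehat{p}_k(\vx^\ii)-\sum_{j=1}^d\frac{1}{n}\sum_{i=1}^n\log\widehat{p}_k(\vx^\ii_j)=\sum_{j=1}^d\widehat{H}'_{kNN,k}(x_j)-\widehat{H}'_{kNN,k}(\vx),
\]
where the last equality is just Eq.~\ref{eq:kNN_h_old} read once for the joint sample $\{\vx^\ii\}$ and once for each projected one-dimensional sample $\{\vx^\ii_j\}_{i=1}^n$; in particular the marginal density estimate $\widehat{p}_k(\vx^\ii_j)$ is precisely the $1$-D naive kNN density estimator of Eq.~\ref{eq:kNN_h_old_p} applied to the $j$th coordinate, so no modification is needed.

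Next I would pass to the bias-corrected estimators. Since the constant $\gamma_k$ depends only on $k$ and not on the dimension, substituting $\widehat{H}'_{kNN,k}(\cdot)=\widehat{H}_{kNN,k}(\cdot)+\gamma_k$ into the display above turns it into $\widehat{I}'_{kNN,k}(\vx)=\sum_{j=1}^d\widehat{H}_{kNN,k}(x_j)-\widehat{H}_{kNN,k}(\vx)+(d-1)\gamma_k$, so that the definition $\widehat{I}_{kNN,k}=\widehat{I}'_{kNN,k}-(d-1)\gamma_k$ collapses to the clean identity
\[
\widehat{I}_{kNN,k}(\vx)=\sum_{j=1}^d\widehat{H}_{kNN,k}(x_j)-\widehat{H}_{kNN,k}(\vx).
\]

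Finally I would take expectations of both sides, use linearity of $\mathbb{E}$ (a sum of only $d+1$ terms, so the limit $n\to\infty$ commutes with the sum with no extra integrability assumptions), and apply Theorem~\ref{theo:kNN_ent} termwise: $\vx$ absolutely continuous implies each marginal $x_j$ is absolutely continuous, and $k$ is a fixed positive integer, so $\lim_{n\to\infty}\mathbb{E}[\widehat{H}_{kNN,k}(x_j)]=H_{kNN,k}(x_j)$ and $\lim_{n\to\infty}\mathbb{E}[\widehat{H}_{kNN,k}(\vx)]=H_{kNN,k}(\vx)$. Defining $I_{kNN,k}(\vx):=\sum_{j=1}^d H_{kNN,k}(x_j)-H_{kNN,k}(\vx)$ then yields the claim.

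There is no genuine obstacle here; the argument is essentially bookkeeping and the ``hard part'' is only in being careful about two things: (i) confirming that the marginal factor in Eq.~\ref{eq:kNN_mi} is exactly the one-dimensional instance of the naive kNN entropy estimator, so that Theorem~\ref{theo:kNN_ent} applies to it verbatim, and (ii) tracking the $d$ copies of $\gamma_k$ coming from the marginals against the single copy from the joint entropy, which is exactly why the correction subtracted in $\widehat{I}_{kNN,k}$ is $(d-1)\gamma_k$. It is worth emphasizing that this corollary only transfers \emph{asymptotic} unbiasedness and says nothing about finite-sample bias, which motivates the analysis of the following sections.
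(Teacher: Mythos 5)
Your argument is correct and is exactly the reasoning the paper intends (the paper states the corollary without an explicit proof, deriving $\widehat{I}_{kNN,k}$ "similar to Eq.~\ref{eq:kNN_h}. based on Theorem~\ref{theo:kNN_ent}"): decompose $\widehat{I}'_{kNN,k}$ via Eq.~\ref{eq:mie} into one joint and $d$ marginal naive entropy estimators, observe that the $(d-1)\gamma_k$ subtraction is precisely what converts each $\widehat{H}'_{kNN,k}$ into its bias-corrected form $\widehat{H}_{kNN,k}$, and apply Theorem~\ref{theo:kNN_ent} termwise to the finite sum. Your bookkeeping of the $\gamma_k$ terms and the observation that the marginal factors are the one-dimensional instances of Eq.~\ref{eq:kNN_h_old_p} are both accurate, so nothing further is needed.
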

\subsection{KSG Estimator}
The KSG mutual information estimator~\citep{kraskov} is a popular variant of the naive kNN estimator. The general principle of KSG is that for each density estimator in different spaces, we would like to use the similar length-scales for $k$-nearest-neighbor distance as in the joint space so that  the bias would be approximately smaller. Although the theoretical properties of this estimator is unknown, it has a relatively good performance in practice, see~\citep{khan} for a comparison of different estimation methods. Unlike naive kNN estimator, the KSG estimator uses the max-norm distance instead of L2-norm. In particular, if $\eps_{i,k}$ is twice the (max-norm) distance to the $k$-th nearest neighbor of $\vx^\ii$, it can be shown that the expectation value (over all ways of drawing the surrounding $N-1$ points) of the log probability mass within the box centered at $\vx^\ii$ is given by this expression.  
\BEA
\mathbb E_{\eps_{i,k}}\hspace{-1mm} \left[\log \int_{|\vx-\vx^\ii|_\infty \leq \eps_{i,k}/2} \hspace{-1mm} f(\vx)~d\vx  \right] = \psi(k)-\psi(N) \nonumber \\
\EEA

We use $\psi$ to represent the digamma function. 
If we assume that the density inside the box (with sides of length $\eps_{i,k}$) is constant, then the integral becomes trivial and we find that, 
\be
\log (f(\vx_i) \eps_{i,k}^{d}) = \psi(k)-\psi(N).
\ee
Rearranging and taking the mean over $-\log f(\vx^\ii)$ leads us to the following entropy estimator: 
 \BEA\label{eq:kl}
   \hat H_{KSG,k}(\vx) &\equiv& \psi(N)-\psi(k)+ \frac{d}{N} \sum_{i=1}^N \log \epsilon_{i,k} \nonumber \\   
 \EEA

Note that $k$, defining the size of neighborhood to use in local density estimation, is a free parameter. Using smaller $k$ should be more accurate, but larger $k$ reduces the variance of the estimate~\citep{khan}. While consistent density estimation requires $k$ to grow with $N$~\citep{kNNdensity}, entropy estimates converge almost surely for any fixed $k$~\citep{qingwang, consistency} under some weak conditions.\footnote{This assumes the probability density is absolutely continuous but see \citep{GNNG} for some technical concerns.} 

To estimate the mutual information, in the joint $\vx$ space we set $k$, the size of the neighborhood, which determines $\eps_{i,k}$ for each point $\vx^\ii$. Next, we consider the smallest rectilinear hyper-rectangle that contains these $k$ points, which has sides of length $\eps^{x_j}_{i,k}$ for each marginal direction $x_j$. We refer to this as the ``max-norm rectangle'' (as shown in Fig.~\ref{fig:ksg}).
Let $n_{x_j}$ be the number of points at a distance less than or equal to $\eps^{x_j}_{i,k}/2$ in the $x_j$-subspace.
For each marginal entropy estimate, we use $n_{x_j}(i)$ instead of $k$ to set the neighborhood size at each point. 
Finally, in the joint space using a rectangle instead of a box in Eq.~\ref{eq:kl} leads to a correction term of size $(d-1)/k$ (details given in~\citep{kraskov}). 
 Adding the entropy estimators together with these choices yields the following.
 \BEA\label{eq:kraskov}
   \hat I_{KSG,k}(\vx) \equiv
  (d-1)\psi(N)+ \psi(k) - (d-1)/k \nonumber
   \\- \frac{1}{N} \sum_{i=1}^N \sum_{j=1}^d \psi (n_{x_j}(i))
 \EEA

\begin{figure}[htbp] 
   \centering
   \subfigure[]{\includegraphics[width=0.2\textwidth]{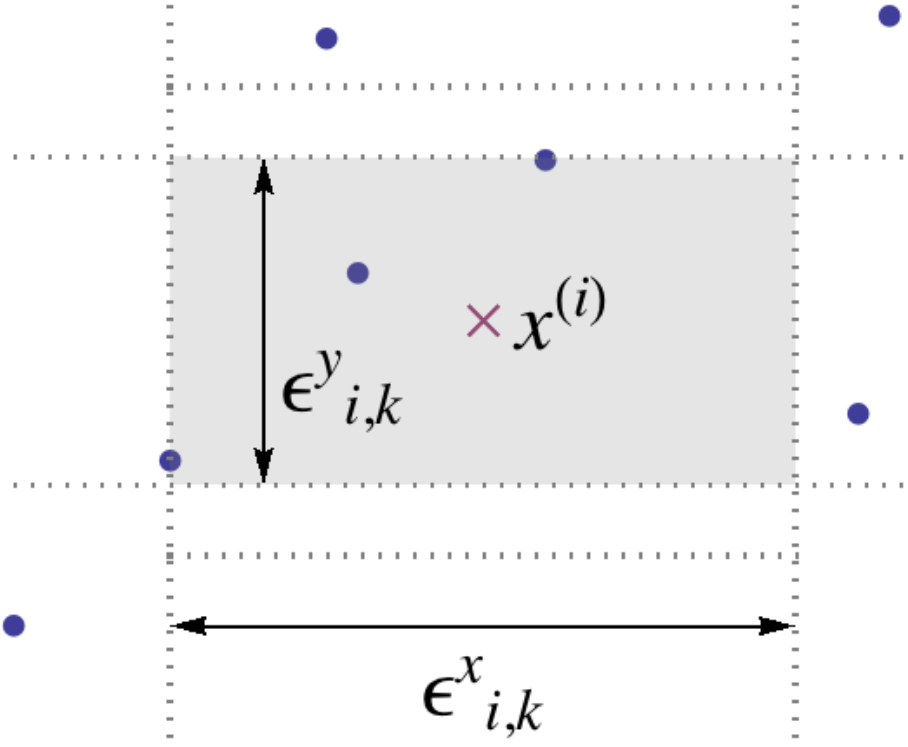} \label{fig:ksg}} 
   \quad
   \subfigure[]{ \raisebox{0.22\height}{\includegraphics[width=0.2\textwidth]{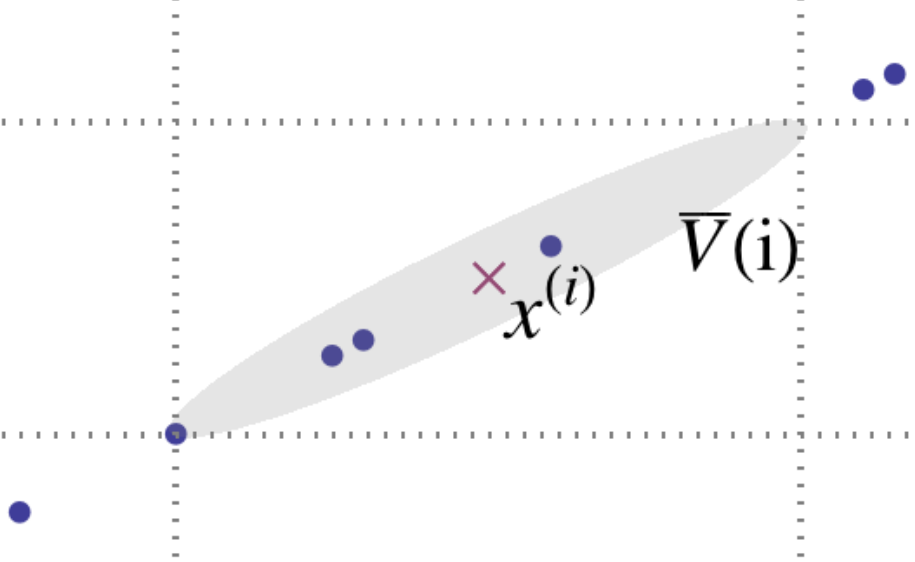} \label{fig:ksg_problem}} } 
   \caption{Centered at a given sample point, $\vx^\ii$, we show the max-norm rectangle containing $k$ nearest neighbors (a) for points drawn from a uniform distribution, $k=3$, and (b) for points drawn from a strongly correlated distribution, $k=4$.}  
\end{figure}

\section{Limitations of kNN-based MI Estimators}
\label{sec:lim}
In this section, we demonstrate a significant flaw in kNN-based MI estimators which is summarized in the following theorems. We then explain why these estimators fail to accurately estimate mutual information unless the correlations are relatively weak or the number of samples is large. 

\begin{theorem} \label{theo:kNN}
For any d-dimensional absolute continuous probability density function, $p(\vx)$, for any $k \geq 1$, for the estimated mutual information to be close to the true mutual information, $| \hat I_{kNN,k}(\vx) - I(\vx) | \leq \varepsilon$, requires that the number of samples, $N$, is at least, $N \ge C \exp \left( {\frac{{I\left( {\mathbf{x}} \right) - \varepsilon }}{{d - 1}}} \right) + 1$, where $C$ is a constant which scales like $O(\frac{1}{d})$.
\end{theorem}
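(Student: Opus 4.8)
The plan is to prove the bound by establishing a \emph{deterministic} upper bound on $\hat I_{kNN,k}(\vx)$ that grows only like $(d-1)\log N$, holding for every draw of the $N$ samples and every underlying absolutely continuous density. Since $|\hat I_{kNN,k}(\vx) - I(\vx)| \le \varepsilon$ forces $\hat I_{kNN,k}(\vx) \ge I(\vx) - \varepsilon$, comparing the two inequalities immediately yields the claimed exponential lower bound on $N$. Working with a pointwise (rather than in-expectation) bound is what gives the strong conclusion that \emph{no} sample configuration below the threshold can be accurate.

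The one non-routine ingredient is a geometric observation about neighbor distances. For a sample point $\vx^\ii$, let $r_k(\vx^\ii)$ be the Euclidean distance to its $k$-th nearest neighbor in the joint $d$-dimensional space, and for each coordinate $j$ let $r_k^{x_j}(\vx^\ii)$ be the corresponding distance in the $j$-th one-dimensional projection. Because projecting onto a coordinate cannot increase a Euclidean distance, the $k$ joint neighbors of $\vx^\ii$ project to $k$ points lying within $r_k(\vx^\ii)$ of $\vx^\ii$ in coordinate $j$; hence $r_k^{x_j}(\vx^\ii) \le r_k(\vx^\ii)$ for all $i,j$. (Ties and the event $r_k(\vx^\ii)=0$ have probability zero by absolute continuity, and I take $N>k$.)

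Next I would substitute the density-estimator formula (Eq.~\ref{eq:kNN_h_old_p}), used once in dimension $d$ and $d$ times in dimension $1$, into the summand of Eq.~\ref{eq:kNN_mi}. The $k/(N-1)$ prefactors combine to $((N-1)/k)^{d-1}$, the $\Gamma$/$\pi$ normalizations combine to a purely dimensional constant $c_d = 2^d\,\Gamma(d/2+1)/\pi^{d/2}$, and the radii combine to $\prod_{j}r_k^{x_j}(\vx^\ii)\big/ r_k(\vx^\ii)^d$, which is at most $1$ by the geometric observation. Taking logarithms and averaging over $i$ gives $\hat I'_{kNN,k}(\vx) \le (d-1)\log\frac{N-1}{k} + \log c_d$; adding the correction $-(d-1)\gamma_k = -(d-1)(\psi(k)-\log k)$ and using that $\psi$ is increasing with $\psi(1)=-\gamma$ (Euler's constant) yields $\hat I_{kNN,k}(\vx) \le (d-1)\log(N-1) + B_d$ with $B_d := (d-1)\gamma + \log c_d$. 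Rearranging $I(\vx)-\varepsilon \le (d-1)\log(N-1) + B_d$ gives $N \ge C\exp\!\big(\frac{I(\vx)-\varepsilon}{d-1}\big) + 1$ with $C = \exp(-B_d/(d-1)) = e^{-\gamma} c_d^{-1/(d-1)}$, and Stirling's formula applied to $c_d^{1/(d-1)}$ gives the stated decay of $C$ in $d$.

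I do not expect a serious obstacle: once the geometric lemma is in hand the rest is bookkeeping with the density formula and the digamma identity. The points requiring a little care are (i) making the argument uniform in $k$ — the correction $-(d-1)\gamma_k$ only \emph{decreases} $\hat I_{kNN,k}$ as $k$ grows, so $k=1$ is the worst case, which the $\psi(1)=-\gamma$ bound handles — and (ii) recognizing that the ``$\le 1$'' bound on the ratio of radii is the crux: it is precisely the implicit \emph{local uniformity} assumption, and for a nearly deterministic joint density the true ratio is vanishingly small, which is exactly why the estimator saturates well below $I(\vx)$.
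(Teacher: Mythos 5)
Your proposal is correct and follows essentially the same route as the paper's own proof in Appendix~\ref{sec:derive_kNN}: the same key geometric lemma $r_k(\vx^\ii_j)\le r_k(\vx^\ii)$, the same substitution of the density formula to get a deterministic bound $\hat I_{kNN,k}(\vx)\le (d-1)\log(N-1)+B_d$, and the same use of the monotonicity of $\gamma_k=\psi(k)-\log k$ to reduce to $k=1$. Your bookkeeping of the normalization constant $c_d=2^d\Gamma(d/2+1)/\pi^{d/2}$ is in fact cleaner than the paper's (which has typos such as $\Gamma(d/2)+1$ for $\Gamma(d/2+1)$), and the only quibble is that Stirling gives $C=\Theta(d^{-1/2})$ rather than literally $O(1/d)$ --- harmless here since any smaller $C$ still yields a valid lower bound on $N$, and the paper is equally loose on this point.
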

The proof of Theorem~\ref{theo:kNN} is shown in the Appendix~\ref{sec:derive_kNN}.
\begin{theorem}\label{theo:ksg}
For any d-dimensional absolute continuous probability density function, $p(\vx)$, for any $k\geq 1$, for the estimated mutual information to be close to the true mutual information, $| \hat I_{KSG,k}(\vx) - I(\vx) | \leq \varepsilon$, requires that the number of samples, $N$, is at least, $N \ge C\exp \left( {\frac{{I\left( {\mathbf{x}} \right) - \varepsilon }}{{d - 1}}} \right)+1$, where $C={{\mathbf{e}}^{ - \frac{{k - 1}}{k}}}$. 
\end{theorem}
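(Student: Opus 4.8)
The plan is to mirror the proof of Theorem~\ref{theo:kNN}: I will derive a \emph{deterministic} upper bound on $\hat I_{KSG,k}(\vx)$ that grows only like $\log N$, and then combine it with the hypothesis $\hat I_{KSG,k}(\vx)\ge I(\vx)-\varepsilon$ to force $N$ to be exponentially large in $I(\vx)$.

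The first and crucial step is to observe that the KSG construction guarantees $n_{x_j}(i)\ge k$ for every sample index $i$ and every coordinate direction $x_j$. Indeed, the $k$ nearest neighbours of $\vx^\ii$ in the joint max-norm metric all sit inside the max-norm rectangle around $\vx^\ii$, hence each of them is at $x_j$-distance at most $\eps^{x_j}_{i,k}/2$ from $\vx^\ii$ and is therefore counted in $n_{x_j}(i)$. This is the exact analogue of the monotonicity $r_k(\vx^\ii_j)\le r_k(\vx^\ii)$ of nearest-neighbour radii used in the naive-kNN case.

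Next I would push this through Eq.~\ref{eq:kraskov}. Since the digamma function is increasing on $(0,\infty)$, $n_{x_j}(i)\ge k$ implies $\psi(n_{x_j}(i))\ge\psi(k)$, so $\frac1N\sum_{i=1}^N\sum_{j=1}^d\psi(n_{x_j}(i))\ge d\,\psi(k)$, and therefore
\be
\hat I_{KSG,k}(\vx) &\le (d-1)\psi(N)+\psi(k)-\tfrac{d-1}{k}-d\,\psi(k) \\
&= (d-1)\left(\psi(N)-\psi(k)-\tfrac1k\right).
\ee
Using the identity $\psi(m)=-\gamma+H_{m-1}$ for positive integers $m$ (with $H_0=0$, $H_m=\sum_{\ell=1}^m 1/\ell$) gives $\psi(N)-\psi(k)-\tfrac1k=H_{N-1}-H_k$, and then the elementary estimates $H_{N-1}\le\log(N-1)+1$ and $H_k\ge 1/k$ yield the clean bound
\be
\hat I_{KSG,k}(\vx)\le (d-1)\left(\log(N-1)+\tfrac{k-1}{k}\right),
\ee
which holds for \emph{every} realisation of the $N$ samples.

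Finally, inserting $I(\vx)-\varepsilon\le\hat I_{KSG,k}(\vx)$ and solving $I(\vx)-\varepsilon\le(d-1)\big(\log(N-1)+\tfrac{k-1}{k}\big)$ for $N$ gives $N\ge e^{-(k-1)/k}\exp\!\big(\tfrac{I(\vx)-\varepsilon}{d-1}\big)+1$, i.e.\ the statement with $C=e^{-(k-1)/k}$. I expect the real work to be not the algebra but making the first step fully rigorous: one must fix the precise definition of the max-norm rectangle and of $n_{x_j}(i)$ entering Eq.~\ref{eq:kraskov} and verify $n_{x_j}(i)\ge k$ under it (in particular that the rectangle is taken centred at $\vx^\ii$, or equivalently that the marginal distances are measured so that all $k$ joint neighbours fall inside); a lesser point is choosing harmonic-number/digamma inequalities sharp enough to land exactly on the constant $C=e^{-(k-1)/k}$.
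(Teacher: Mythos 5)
Your proposal is correct and follows essentially the same route as the paper's own proof: bound $\hat I_{KSG,k}$ via $n_{x_j}(i)\ge k$ and monotonicity of $\psi$, reduce to $(d-1)(\psi(N)-\psi(k)-1/k)$, convert to harmonic numbers, and use $H_{N-1}\le\log(N-1)+1$ together with $H_k\ge 1/k$ (the paper equivalently drops $-H_{k-1}\le 0$) to reach $(d-1)(\log(N-1)+\tfrac{k-1}{k})$ and hence $C=e^{-(k-1)/k}$. Your added justification of $n_{x_j}(i)\ge k$ from the definition of the max-norm rectangle is a point the paper merely asserts, so no gap remains.
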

\begin{proof}
Note that $\psi(n) = H_{n-1} - \gamma$, where $H_{n}$ is the $n$-th harmonic number and  $\gamma \approx 0.577$ is the Euler-Mascheroni constant.
\begin{eqnarray}
\widehat{I}_{KSG,k}(\vx)  &\le&  (d-1)\psi(N)+\psi(k) - (d-1)/k  \nonumber \\
&~&- \frac{1}{N}\sum\limits_{i = 1}^N {\sum\limits_{j = 1}^d {\psi \left( k \right)} }  \nonumber \\
&=&(d-1)(\psi(N)-\psi(k)-1/k)  \nonumber \\
&=&(d-1)(H_{N-1}-\gamma - H_{k-1} + \gamma - 1/k) \nonumber  \\
&\le&(d-1)(\log(N-1) + (k-1)/k) \nonumber \\
\end{eqnarray}
The first inequality is obtained from Eq.~\ref{eq:kraskov} by observing that ${n_{{x_j}}}\left( i \right) \ge k$ for any $i,j$ and that $\psi(k)$ is a monotonically increasing function.  And the last inequality is obtained by dropping the term $-H_{k-1}\le 0$ and using the well-known upper bound $H_N \le \log N + 1$. Requiring that $ |\hat I_{KSG,k}(\vx)   -  I(\vx)| < \varepsilon$, we obtain $N \ge C \exp \left( {\frac{{I\left( {\mathbf{x}} \right) - \varepsilon}}{{d - 1}}} \right)+1$, where $C={{{e}}^{ - \frac{{k - 1}}{k}}}.$
\end{proof}

The above theorems state that for any fixed dimensionality, the number of samples needed for estimating mutual information $I(\mathbf{x})$ increases exponentially with the {\em magnitude} of  $I(\mathbf{x})$. From the point of view of determining independence, i.e., distinguishing $I(\vx) =0$ from $I(\vx) \neq 0$, this restriction is not particularly troubling. However, for finding strong signals in data it presents a major barrier. Indeed, consider two random variables $X$ and $Y$, where $X\sim \mathcal{U}(0,1)$ and $Y=X+\eta \mathcal{U}(0,1)$. When $\eta \rightarrow 0$, the relationship between $X$ and $Y$ becomes nearly functional, and the mutual information diverges as $I(X:Y)\rightarrow \log\frac{1}{\eta}$. As a consequence, the number of samples needed for accurately estimating $I(X:Y)$ diverges as well. This is depicted in Fig.~\ref{fig:simple_model_lower_bound} where we compare  the empirical lower bound to our theoretical bound given by Theorem~\ref{theo:ksg}. It can be seen that the theoretical bounds are rather conservative, but they have the same exponentially growing rates comparing to the empirical ones.


What is the origin of this undesired behavior? An intuitive and general argument comes from looking at the assumption of local uniformity in kNN-based estimators. In particular, both naive kNN and KSG estimators approximate the probability density in the kNN ball or max-norm rectangle containing the $k$ nearest neighbors with uniform density. If there are strong relationships (in the joint $\vx$ space, the density becomes more singular), then we can see in Fig.~\ref{fig:ksg_problem} that the uniform assumption becomes problematic. 



\begin{figure}[ht]
\centering
\includegraphics[width=0.8\linewidth]{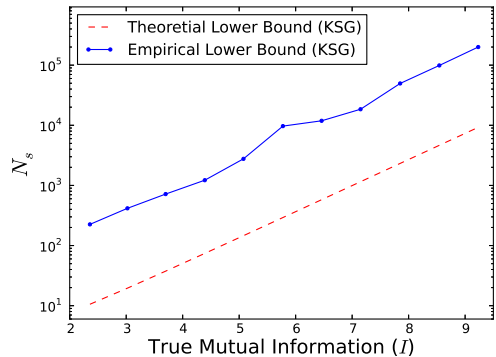}
\caption{A semi-logarithmic plot of $N_s$ (number of required samples to achieve an error at most $\varepsilon$) for KSG estimator for different values of $I(X:Y)$. We set $\varepsilon = 0.1$, $k=1$. 
}

\label{fig:simple_model_lower_bound}
\end{figure}

\comment{


In the simplest picture of mutual information of two continuous random variables, we have $X$ sampled from a uniform distribution $\mathcal{U}(0,1)$ and $Y=X+\mathcal{U}(0,\theta)$ where $\theta$ represents the noise level. If $\theta$ is small enough, a simple analytic calculation would give $I(X;Y)=H\left( Y \right) - H\left( {Y|X} \right) \sim \log \frac{1}{\theta}$. We empirically calculated $N_s$, the least number of samples needed for the KSG estimated mutual information to be as close as the true mutual information $\left| {{{\widehat I}_{KSG}}\left( {\mathbf{x}} \right) - I\left( {\mathbf{x}} \right)} \right| \le \epsilon$ and $\theta$ is set to $0.1 \times 2^i$ $(i=0,1,...,10)$ in order to get different values of mutual information. Fig.~\ref{fig:simple_model_lower_bound} compared the empirical lower bound to our theoretical bound of Theorem~\ref{theo:ksg}. It can be seen that the theoretical bounds are rather conservative, but they have the same exponentially growing rates comparing to the empirical ones.

\begin{figure}[ht]
\centering
\includegraphics[width=0.8\linewidth]{figs/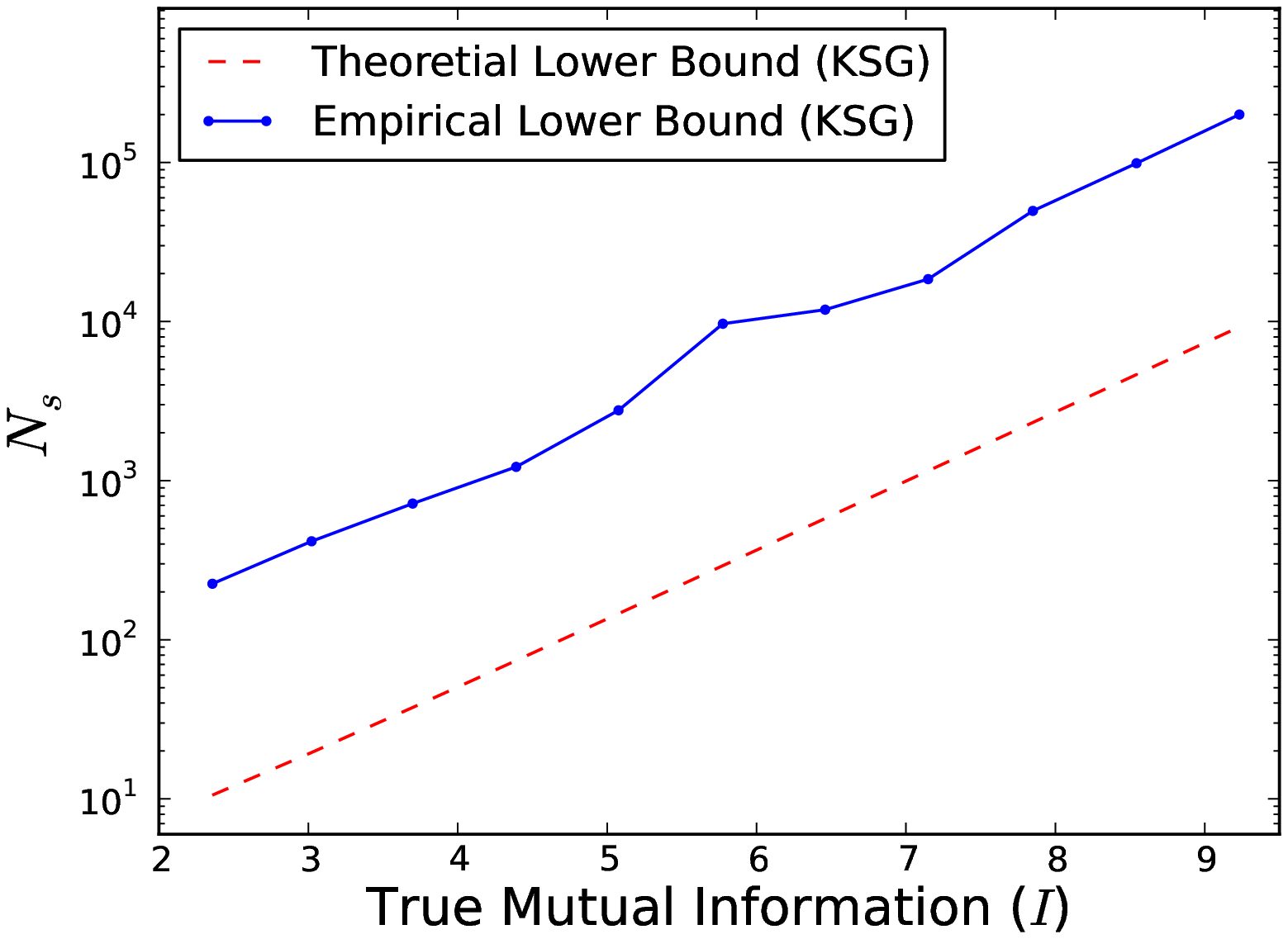}
\caption{A semi-logarithmic plot of $N_s$(least number of samples required) of KSG estimator for different values of mutual information. We set nearest-neighbor parameter $k=1$ and $\epsilon=0.5$.}

\label{fig:simple_model_lower_bound}
\end{figure}
 
\begin{figure}[ht]
\centering
\includegraphics[width=0.8\linewidth]{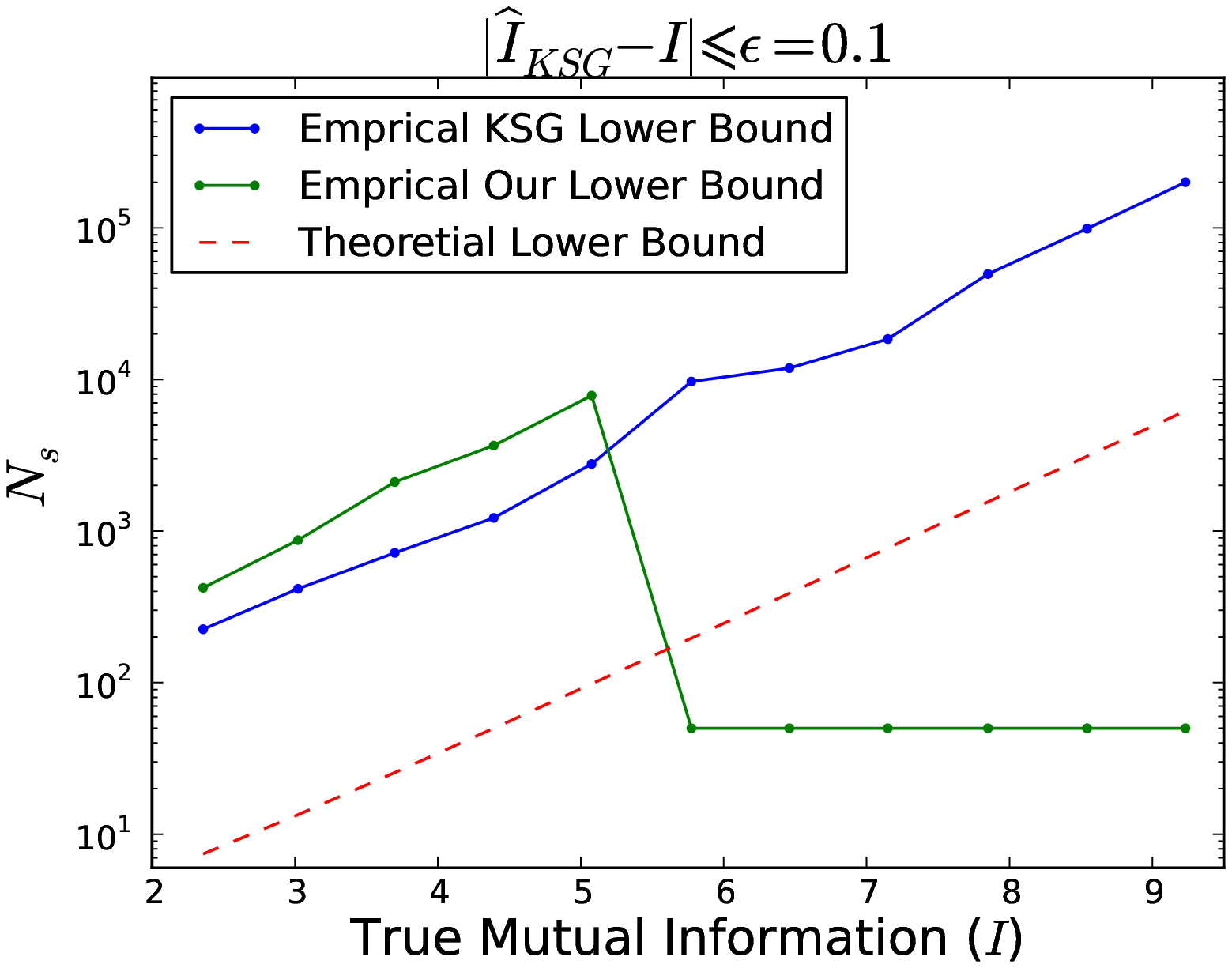}
\caption{A semi-logarithmic plot of $N_s$(least number of samples required) of KSG estimator for different values of mutual information. We set nearest-neighbor parameter $k=1$ and $\epsilon=0.1$.}

\label{fig:simple_model_lower_bound}
\end{figure}

\paragraph{The curse of dimensionality} It is worth noting that Theorem~\ref{theo:kNN} and~\ref{theo:ksg} seem to tell us that for fixed value of mutual information, $N$ would decrease as the dimensionality $d$ increases. But we are aware that this is not always the case. To understand this, let us construct two 3D linear models as below:

(a) $X\sim \mathcal{U}(0,1)$, $Y\sim X+\mathcal{U}(0,\theta)$, $Z\sim X+\mathcal{U}(0,\theta)$

(b) $X\sim \mathcal{U}(0,1)$, $Y\sim\mathcal{U}(0,1)$, $Z\sim X+Y+\mathcal{U}(0,\theta)$

where $\theta << 1$ controls the noise. 

Under these two models, we can derive explicitly the mutual information, ${I_{\left( a \right)}}\left( {X:Y:Z} \right) = H(Z)-H(Z|X,Y)+I(X:Y) \sim \log \left( {\frac{1}{\theta }} \right)$; ${I_{\left( b \right)}}\left( {X:Y:Z} \right) = H(Z)-H(Z|X,Y)+H(Y)-H(X|Y) \sim 2 \times \log \left( {\frac{1}{\theta }} \right)$. 
How many number of samples required to make a valid local uniform assumption such that the mutual information is reasonably estimated under these two models? Given a local k-nearest-neighbor distance ball at a point $(x,y,z)$, a necessary condition would require the radius of the ball, $r$, is less than $\theta / 2$ in order to satisfy local uniformity assumption (otherwise the ball would have intersections with the boundary). Imagining $N$ points sampled from model (a), the mean 1-nearest-neighbor distance is $\overline d_{(a)} = O(\frac{1}{N})$, while in model (b), $\overline d_{(b)} = O((\frac{1}{N})^{1/2})$. Essentially in this scenario is shown in Fig.~\ref{fig:model_compare}, one can see that points are more separate in model (b) than model (a). Using the necessary condition that $\overline d \ge \theta/2$, the least number of samples required for model (a) and model (b) is as follows:
\be
  {N_a} = o\left( {\frac{1}{\theta }} \right) = o\left( {{e^{\frac{{I\left( a \right)}}{2}}}} \right) 
\ee
and 
\be
{N_b} = o\left( {\frac{1}{{{\varepsilon ^2}}}} \right) = o\left( {{e^{2I\left( b \right)}}} \right) = o\left( {{{\left( {{e^{\frac{{I\left( b \right)}}{2}}}} \right)}^4}} \right)
\ee

\begin{figure}[ht]
\centering
\includegraphics[width=0.8\linewidth]{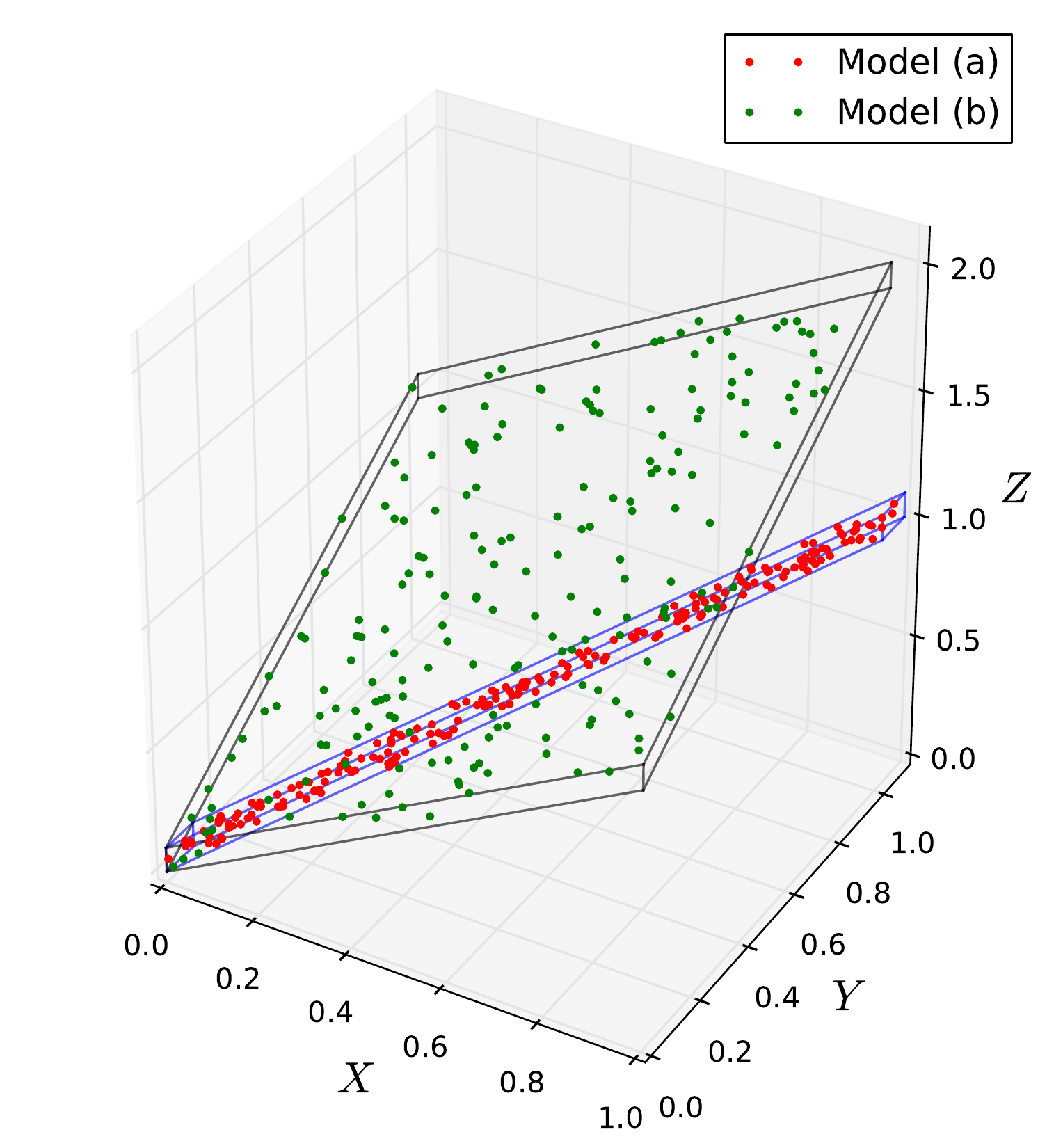}
\caption{(a) }

\label{fig:model_compare}
\end{figure}

We can see that the bound for model (a) is the same order as Theorem~\ref{theo:kNN} and \ref{theo:ksg}, while the bound for model (b) is exponentially large than the bound we derived. Even if the mutual information in model (b) is smaller, it alternatively requires more samples than that in model (a). Extending model (b) in higher dimensions, we get the following theorem:

\begin{theorem}\label{exp_model}
Suppose there is a $D$-dimensional random variable $\vx= (X_1,X_2,...,X_D)$, where $X_i$($1\le i\le D-1$) drawn from a uniform distribution $\mathcal{U}(0,1)$ and $X_D=\sum\limits_{i = 1}^{D - 1} {{X_i}}  + \mathcal{U}\left( {0,\theta } \right), \theta << D$, for the local uniform assumption to be held for most points in kNN-based estimators, we require that the number of samples, $N = o\left( {{{\left( {\frac{1}{\theta }} \right)}^{D-1}}} \right) = o\left( \frac{{{{\left( {\exp \left( {I\left( {\mathbf{x}} \right)} \right)} \right)}^{D - 1}}}}{D} \right)$.
\end{theorem}

}



 
\section{Improved kNN-based Estimators}
\label{sec:corr}
In this section, we suggest a class of kNN-based estimators that relaxes the local uniformity assumption. 
\paragraph{Local Nonuniformity Correction (LNC)}
Our second contribution is to provide a general method for overcoming the limitations above. 
Considering the ball (in naive kNN estimator) or max-norm hyper-rectangle (in KSG estimator) around the point $\vx^\ii$ which contains $k$ nearest neighbors, let us denote this region of the space with $\mathcal V(i) \subset \mathbb R^{d}$, whose volume is $V(i)$. 
Instead of assuming that the density is uniform inside $\mathcal V(i)$ around the point $\vx^\ii$, we assume that there is some subset,
$\bar{\mathcal V}(i) \subseteq \mathcal V(i)$ 
with volume 
$\bar V(i) \leq V(i)$ 
on which the density is constant, i.e., 
$\hat p (\vx^\ii) = \frac{\mathbb I[\vx \in \bar {\mathcal V}(i)]}{\bar V(i)}.$
This is illustrated with a shaded region in Fig.~\ref{fig:ksg_problem}. 
We now repeat the derivation above using this altered assumption about the local density around each point for $\hat H(\vx)$. 
We make no changes to the entropy estimates in the marginal subspaces. 
Based on this idea, we get a general correction term for kNN-based MI estimators (see Appendix~\ref{sec:derive_lnc} for the details of derivation):

\be\label{eq:lnc}
\hat I_{LNC}(\vx) = \hat I(\vx) - \frac1N \sum_{i=1}^N \log \frac{\bar V(i)}{V(i)}
\ee
where $\hat I(\vx)$ can be either $\hat I_{kNN}(\vx)$ or $\hat I_{KSG}(\vx)$.


If the  local density in the $k$-nearest-neighbor region $\mathcal V(i)$ is highly non-uniform, as is the case for strongly related variables like those in Fig.~\ref{fig:ksg_problem}, then the proposed correction term will improve the estimate. For instance, if we assume that relationships in data are smooth, functional relationships plus some noise, the correction term will yield significant improvement as demonstrated empirically in Sec.~\ref{sec:results}. We note that this correction term is not bounded by $N$, but rather by our method of estimating $\bar V$.  Next, we will give one concrete implementation of this idea by focusing on the modification of KSG estimator.

\paragraph{Estimating Nonuniformity by Local PCA}
With the correction term in Eq.~\ref{eq:lnc}, we have transformed the problem into that of finding a local volume on which we believe the density is positive. Regarding KSG estimator, instead of a uniform distribution within the max-norm rectangle in the neighborhood around the point $\vx^\ii$, we look for a small, rotated (hyper)rectangle that covers the neighborhood of $\vx^\ii$. 
The volume of the rotated rectangle is obtained by doing a localized principle component analysis around all of $\vx^\ii$'s $k$ nearest neighbors, and then multiplying the maximal axis values together in each principle component after the $k$ points are transformed to the new coordinate system\footnote{Note that we manually set the mean of these $k$ points to be $\vx^\ii$ when doing PCA, in order to put $\vx^\ii$ in the center of the rotated rectangle.}. The key advantage of our proposed estimator is as follows: while KSG assumes \textit{local uniformity} of the density over a region containing $k$ nearest neighbors of a particular point, our estimator relies on a much weaker assumption of local linearity over the same region. Note that the \textit{local linearity} assumption has also been widely adopted in the manifold learning, for example, \textit{local linear embedding}(LLE)~\citep{LLE} and \textit{local tangent space alignment}(LTSA)~\citep{LTSA}.

\paragraph{Testing for Local Nonuniformity}
One problem with this procedure is that we may find that, locally, points may occupy a small sub-volume, 
i.e., even if the local neighborhood is actually drawn from a uniform distribution(as shown in Fig.~\ref{fig:ksg}), the volume of the PCA-aligned rectangle will with high probability be smaller than the volume of the max-norm rectangle, leading to an artificially large non-uniformity correction.

To avoid this artifact, we consider a trade-off between the two possibilities: for a fixed dimension $d$ and nearest neighbor parameter $k$, we find a constant ${\alpha _{k,d}}$, such that if $\bar V(i) / V(i) < \alpha_{k,d}$, then we assume local uniformity is violated and use the correction $\bar V(i)$, otherwise the correction is discarded for point $\vx^\ii$. Note that if $\alpha_{k,d}$ is sufficiently small, then the correction term will be always discarded, so that our estimator reduces to the KSG estimator. Good choices of $\alpha_{k,d}$ are set using arguments described in Appendix~\ref{sec:derive_alpha}. Furthermore, we believe that as long as the expected value of $\mathbb E[\bar V(i) / V(i)] \ge \alpha_{k,d}$ in large $N$ limit, for some properly selected $\alpha_{k,d}$, then the consistency properties of the proposed estimator will be identical to the constancy properties of the KSG estimator. A rigorous proof of this hypothesis is left as a future work.

The full algorithm for our estimator is given in Algorithm~\ref{alg:mie}.
\begin{algorithm}[h]
\caption{\textbf{Mutual Information Estimation with Local Nonuniform Correction}}
\label{alg:mie}
\begin{algorithmic}

\State{\textbf{Input:} points $\vx^{(1)},\vx^{(2)},...,\vx^{(N)}$, parameter $d$ (dimension), $k$ (nearest neighbor), $\alpha_{k,d}$}
\State{\textbf{Output:} $\hat I_{LNC}(\vx)$}

\State{Calculate $\hat I_{KSG}(\vx)$ by KSG estimator, using the same nearest neighbor parameter $k$}
\For{each point $\vx^\ii$}
\State{Find $k$ nearest neighbors of $\vx^\ii$: $kNN^\ii_1$, $kNN^\ii_2$,..., $kNN^\ii_k$} 
\State{Do PCA on these $k$ neighbors, calculate the volume corrected rectangle $
\bar V(i)$}
\State{Calculate the volume of max-norm rectangle $V(i)$}
\If{$\bar V(i) / V(i) < \alpha_{k,d}$}
\State{$LNC_i = \log \frac{\bar V(i)}{V(i)}$}
\Else
\State{$LNC_i = 0.0$}
\EndIf
\EndFor

\State{Calculate $LNC^*$: average value of $LNC_1,LNC_2,...,LNC_N$}
\State{$\hat I_{LNC} = \hat I_{KSG} - LNC^*$}

\end{algorithmic}
\end{algorithm}

\section{Experiments}
\label{sec:results}

We evaluate the proposed estimator on both synthetically generated and real-world data. For the former, we considered various functional relationships and thoroughly examined the performance of the estimator over a range of noise intensities. For the latter, we applied our estimator to the WHO dataset used previously in~\citep{reshef}. Below we report the results.

\subsection{Experiments with synthetic data }
\paragraph{Functional relationships in two dimensions} 
In the first set of experiments, we generate samples from various functional relationships of the form $Y=f(X)+\eta$ that were previously studied in~\citet{reshef, kinney}. The noise term $\eta$ is distributed uniformly over the interval $\left[ { - \sigma/2, \sigma/2 } \right]$, where  $\sigma$ is used to control the noise intensity. We also compare the results to several baseline estimators: KSG~\citep{kraskov}, generalized nearest neighbor graph (GNN)~\citep{GNNG}~\footnote{We use the online code \url{http://www.cs.cmu.edu/~bapoczos/codes/REGO\_with\_kNN.zip} for the GNN estimator.}, minimum spanning trees (MST)~\citep{spanning, MST_MI}, and exponential family with maximum likelihood estimation (EXP)~\citep{EXP_MI1}~\footnote{We use the Information Theoretical Estimators Toolbox (ITE)~\citep{ITEtoolbox} for MST and EXP estimators.}. 
\begin{figure}[htbp]
	\centering
	\includegraphics[width=1.0\linewidth]{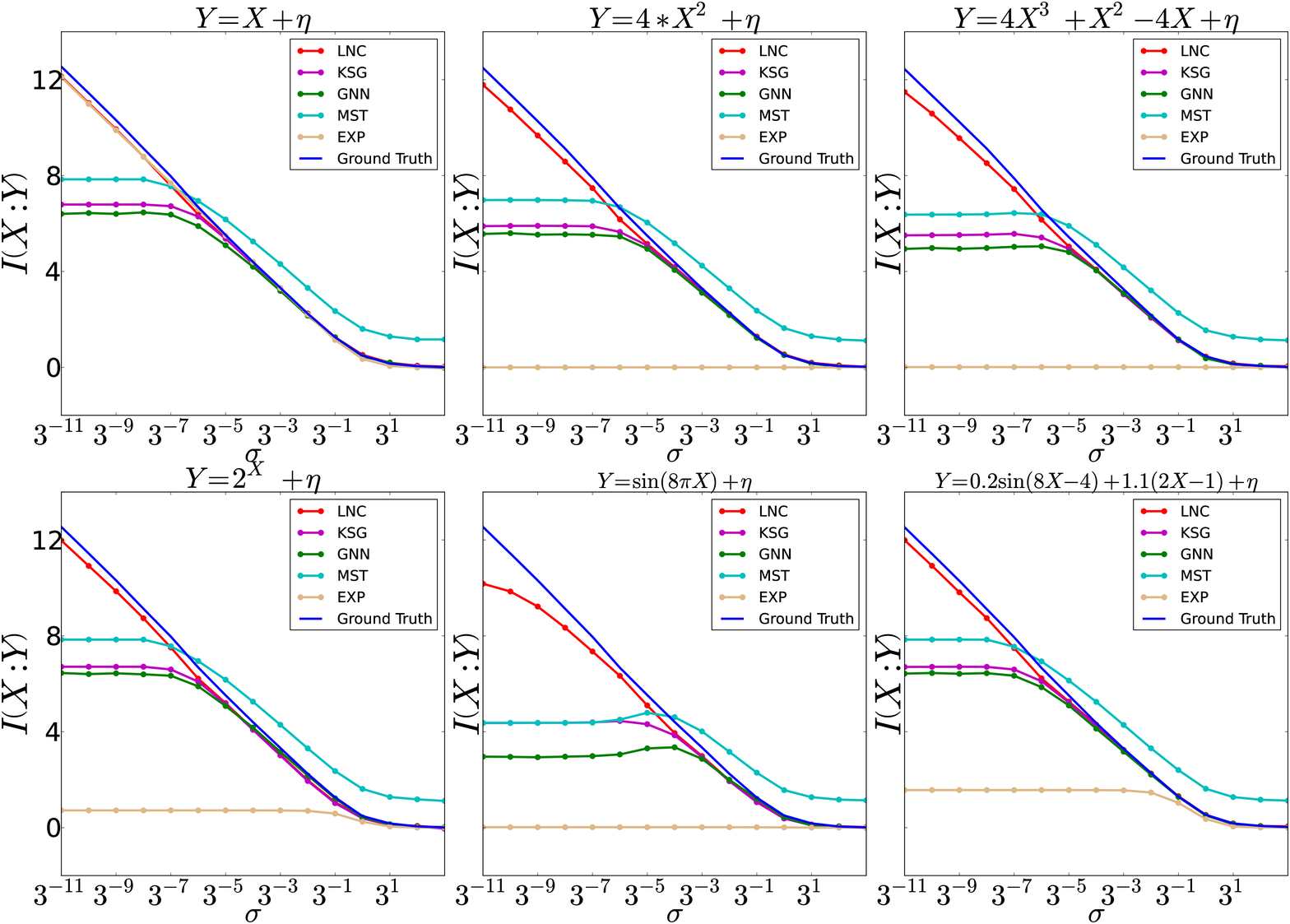}
	\caption{For all the functional relationships above, we used a sample size $N=5000$ for each noise level and the nearest neighbor parameter $k=5$ for LNC, KSG and GNN estimators.}
	\label{fig:synthetic_2d}
\end{figure}

Figure~\ref{fig:synthetic_2d} demonstrates that the proposed  estimator, LNC, consistently outperforms the other estimators. Its superiority is most significant for the low noise regime. In that case, both KSG and GNN estimators are bounded by the sample size while LNC keeps growing. MST tends to overestimate MI for large noise but still stops growing after the noise fall below a certain intensity\footnote{Even if $k=2$ or $3$, KSG and GNN estimators still stop growing after the noise fall below a certain intensity.}. Surprisingly, EXP is the only other estimator that performs comparably with LNC for the linear relationship (the left most plot in Figure~\ref{fig:synthetic_2d}). However, it fails dramatically for all the other relationship types. See Appendix~\ref{sec:more} for more functional relationship tests.
\paragraph{Convergence rate} Figure~\ref{fig:converge_rate} shows the convergence of the two estimators $\hat I_{KSG}$ and $\hat I_{LNC}$, at a fixed (small) noise intensity, as we vary the sample size. We test the estimators on both two and five dimensional data for linear and quadratic relationships. 
We observe that LNC is doing better overall. In particular, for linear relationships in 2D and 5D, as well as quadratic relationships in 2D, the required sample size for LNC is several orders of magnitude less that for $\hat I_{KSG}$.  For instance, for the 5D linear relationship $KSG$ does not converge even for the sample size  ($10^5$) while $LNC$ converges to the true value with only $100$ samples.  

Finally, it is worthwhile to remark on the relatively slow convergence of  $LNC$ for the 5D quadratic example. This is because $LNC$, while relaxing the local uniformity assumptions, still assumes local linearity. The stronger the nonlinearity, the more samples are required to find neighborhoods that are locally approximately linear. We note, however, that LNC still converges faster than KSG.

\begin{figure}[ht]
\centering
\minipage{0.22\textwidth}
\includegraphics[width=\linewidth]{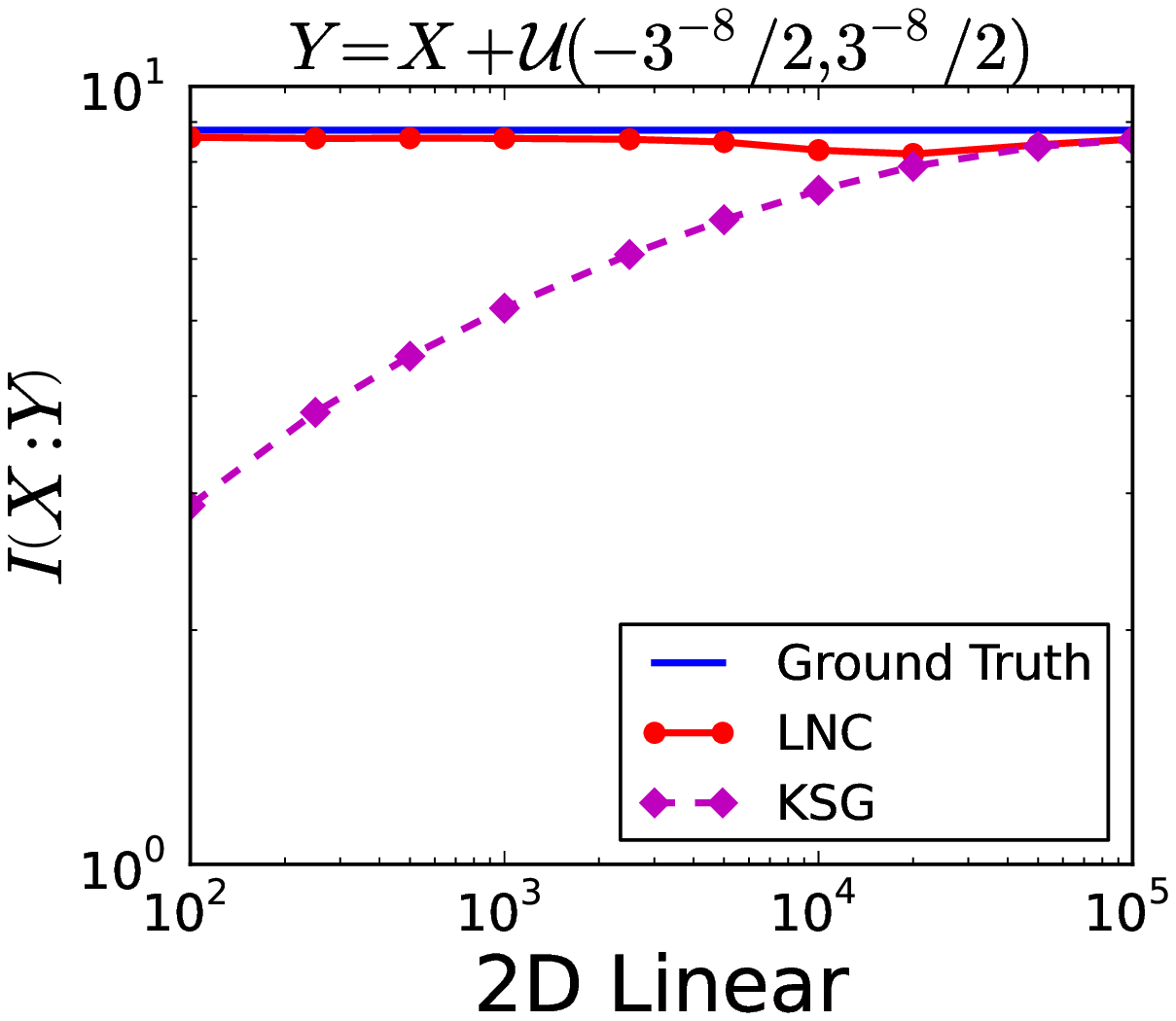}
\endminipage
\minipage{0.22\textwidth}
\includegraphics[width=\linewidth]{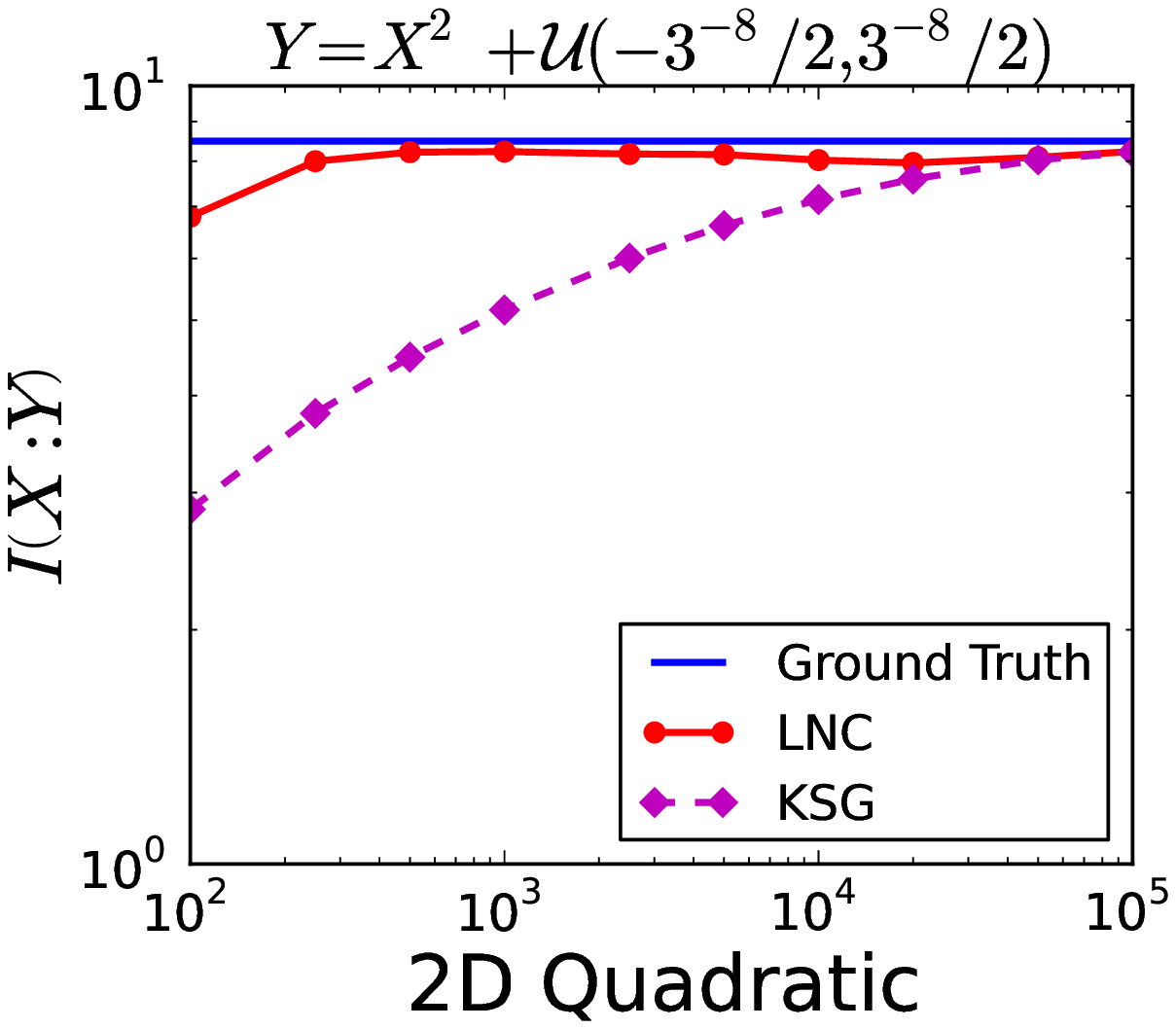}
\endminipage
\\

\minipage{0.22\textwidth}
\includegraphics[width=\linewidth]{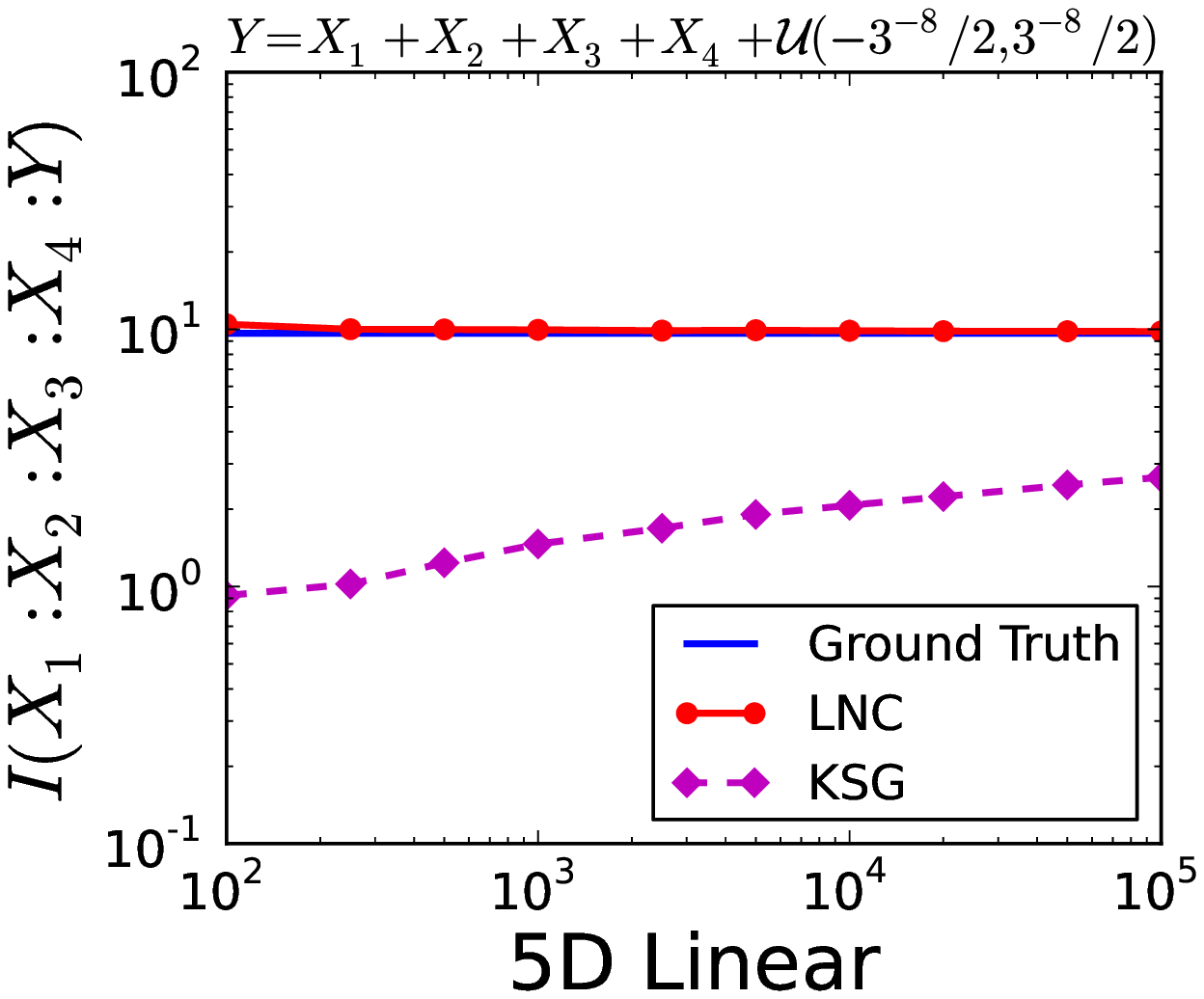}
\endminipage
\minipage{0.22\textwidth}
\includegraphics[width=\linewidth]{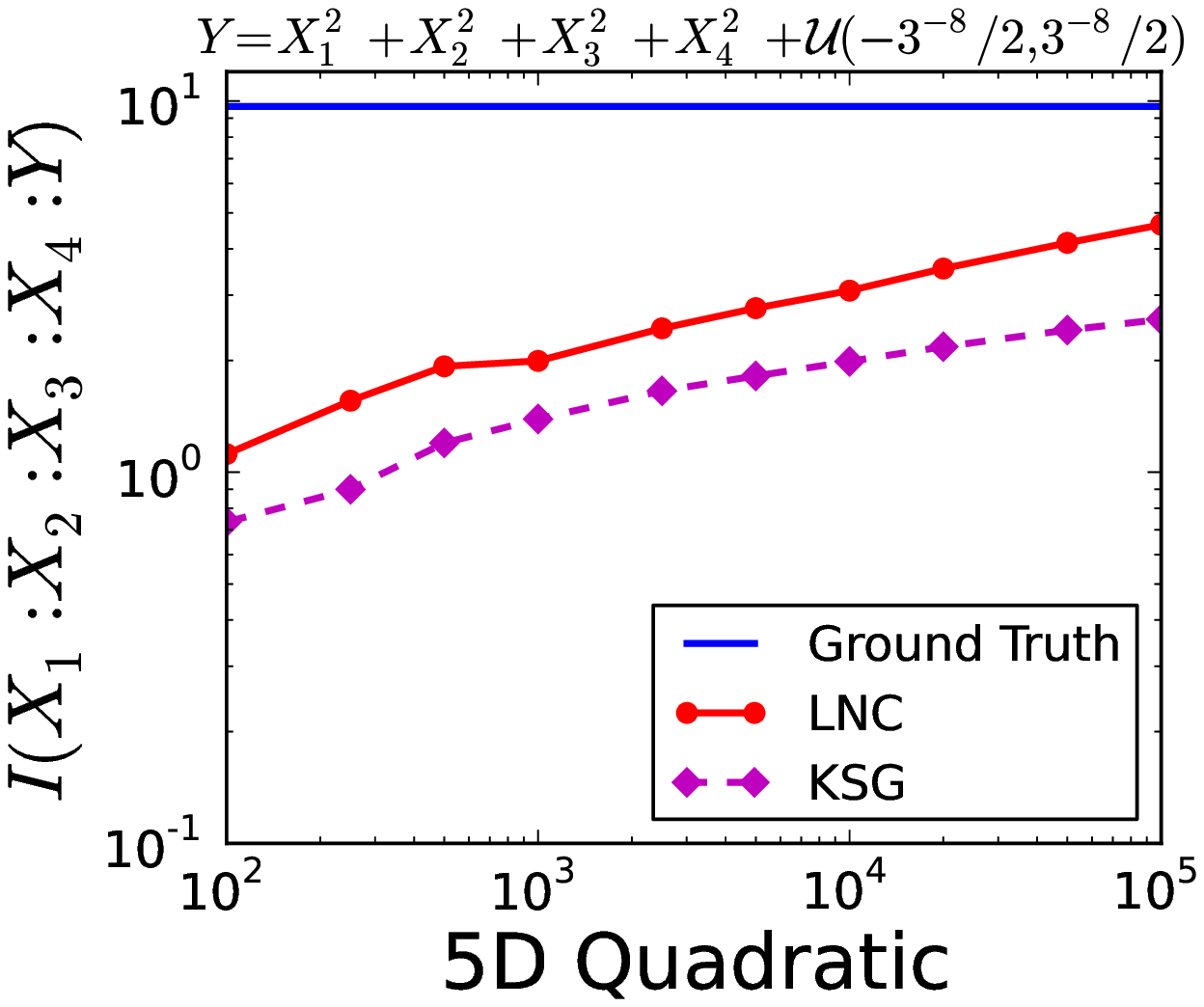}
\endminipage

\caption{Estimated MI using both KSG and LNC estimators in the number of samples ($k=5$ and $\alpha_{k,d}=0.37$ for 2D examples; $k=8$ and $\alpha_{k,d}=0.12$ for 5D examples)}
\label{fig:converge_rate}
\end{figure}

\subsection{Experiments with real-world data}
\subsubsection{Ranking Relationship Strength}
We evaluate the proposed estimator on the WHO dataset which has $357$ variables describing various socio-economic, political, and health indicators for different countries~\footnote{WHO dataset is publicly available at \url{http://www.exploredata.net/Downloads}}. We calculate the mutual information between pairs of variables which have at least 150 samples. Next, we rank the pairs based on their estimated mutual information and choose the top 150 pairs with highest mutual information. For these top 150 pairs, We randomly select a fraction $\rho$ of samples for each pair, hide the rest samples and then recalculate the mutual information. We want to see how mutual information-based rank changes by giving different amount of less data, i.e., varying $\rho$. A good mutual information estimator should give a similar rank using less data as using the full data. We compare our LNC estimator to KSG estimator. Rank similarities are calculated using the standard Spearman's rank correlation coefficient described in~\citep{spearman}. Fig~\ref{fig:WHO_rank} shows the results. We can see that LNC estimator outperforms KSG estimator, especially when the missing data approaches 90\%, Spearman correlation drops to 0.4 for KSG estimator, while our LNC estimator still has a relatively high score of 0.7.

\begin{figure}[htbp]
\centering
\includegraphics[width=0.682\linewidth]{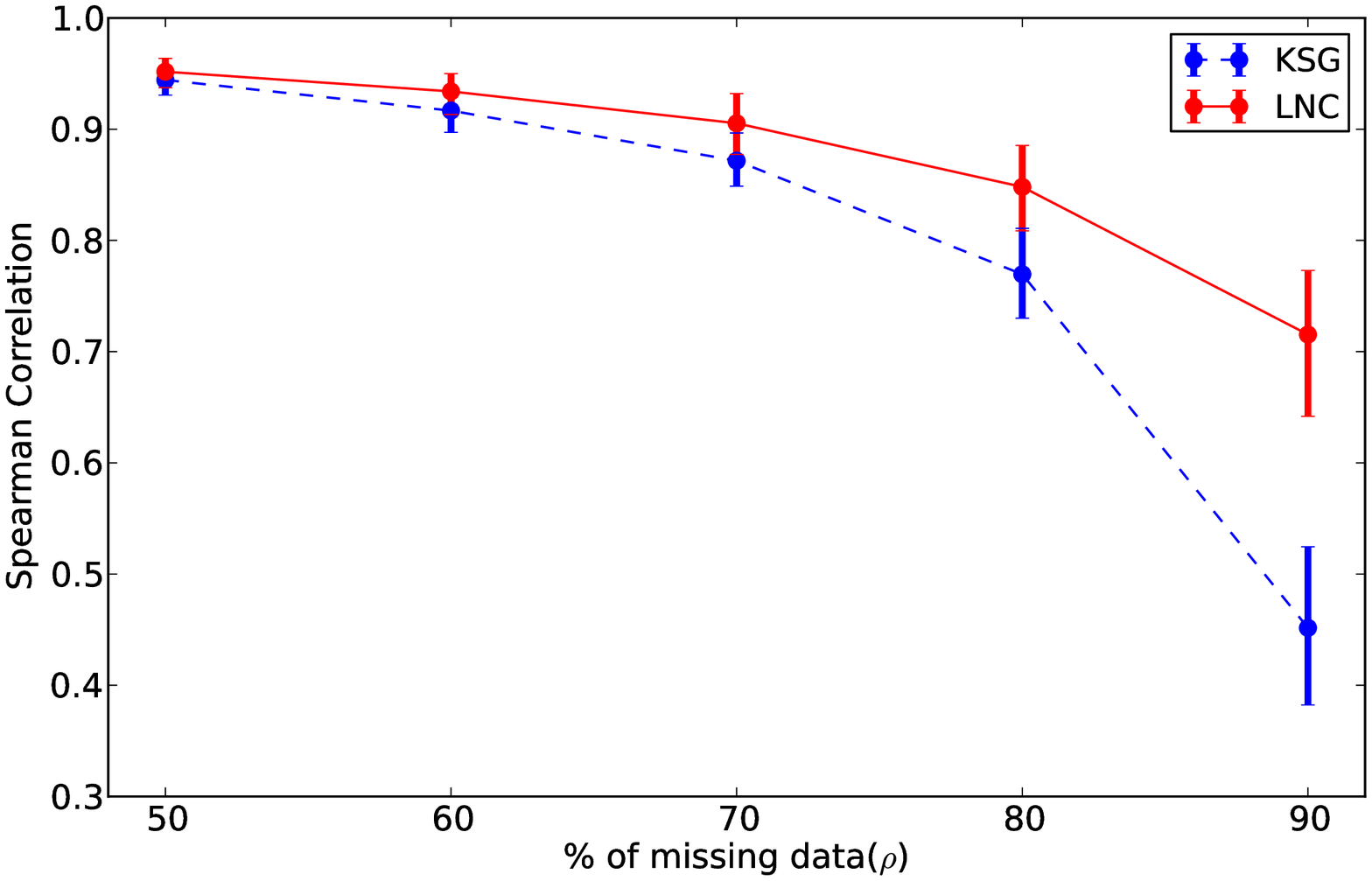}
\caption{Spearman correlation coefficient between the original MI rank and the rank after hiding some percentage of data by KSG and LNC estimator respectively. The 95\% confidence bars are obtained by repeating the experiment for 200 times.}

\label{fig:WHO_rank}
\end{figure}

\subsubsection{Finding interesting triplets}
We also use our estimator to find strong {\em multivariate} relationships in the WHO data set. Specifically, we search for {\em synergistic} triplets $(X,Y,Z)$, where one of the variables, say $Z$, can be predicted by knowing both $X$ and $Y$ simultaneously, but not by using either variable separately. In other words, we search for triplets 
$(X,Y,Z)$ such that the pair-wise mutual information between the pairs $I(X:Y)$, $I(X:Z)$ and $I(Y:Z)$ are low,  but the multi-information $I(X:Y:Z)$ is relatively high. We rank relationships using the following synergy score:\footnote{Another measure of synergy is given by the so called ``interaction information": $I\left( {X:Y} \right) + I\left( {Y:Z} \right) + I\left( {Z:X} \right) - I\left( {X:Y:Z} \right)$.} $SS = {I\left( {X:Y:Z} \right)}/{\max \left\{ {I\left( {X:Y} \right),I\left( {Y:Z} \right),I\left( {Z:X} \right)} \right\}}$.



We select the triplets that have synergy score above a certain threshold. Figure~\ref{fig:synergy} shows two synergistic relationships detected by $LNC$ but not by $KSG$. In these examples, both $KSG$ and $LNC$ estimators yield low mutual information for the pairs $(X,Y)$, $(Y,Z)$ and $(X,Z)$. However, in contrast to KSG, our estimator yields a relatively high score for multi-information among the three variables.  
\comment{
Table~\ref{table:WHO} shows the statistics of significant triplets for two different thresholds, obtained using the KSG and LNC estimators. We observe that the LNC estimator selects significantly more triplets than KSG does. In particular,  the KSG estimator does not find any synergistic triplets for which $I(X:Y:Z)\ge 1$.

\begin{table}[htbp]
\centering
{\footnotesize
\begin{tabular}{|l|c|c|}
\hline
 & \multicolumn{2}{|c|}{$I(X:Y:Z) >= 0.5$}  \\
\hline
\texttt{SS} & \texttt{triplets\#(KSG)} & \texttt{triplets\#(LNC)}  \\
\hline
\texttt{>=5} & \texttt{379} &  \textbf{\texttt{13522}}  \\
\hline
\texttt{>=10} & \texttt{72} & \textbf{\texttt{5767}}   \\
\hline
\texttt{>=20} & \texttt{18} & \textbf{\texttt{3189}}   \\
\hline
\end{tabular}
\\
\begin{tabular}{|l|c|c|}
\hline
 & \multicolumn{2}{|c|}{$I(X:Y:Z) >= 1.0$} \\
 \hline
\texttt{SS} & \texttt{triplets\#(KSG)} & \texttt{triplets\#(LNC)}  \\
\hline
\texttt{>=5} & \texttt{0} & \textbf{\texttt{877}} \\
\hline
\texttt{>=10} & \texttt{0} & \textbf{\texttt{280}} \\
\hline
\texttt{>=20} & \texttt{0} & \textbf{\texttt{130}} \\
\hline
\end{tabular}

}
\label{table:WHO}
\caption{Number of synergistic triplets selected by KSG and LNC estimators}

\end{table}
}

\begin{figure}[!htb]
\centering
\subfigure[]{
    \includegraphics[width=0.74\linewidth]{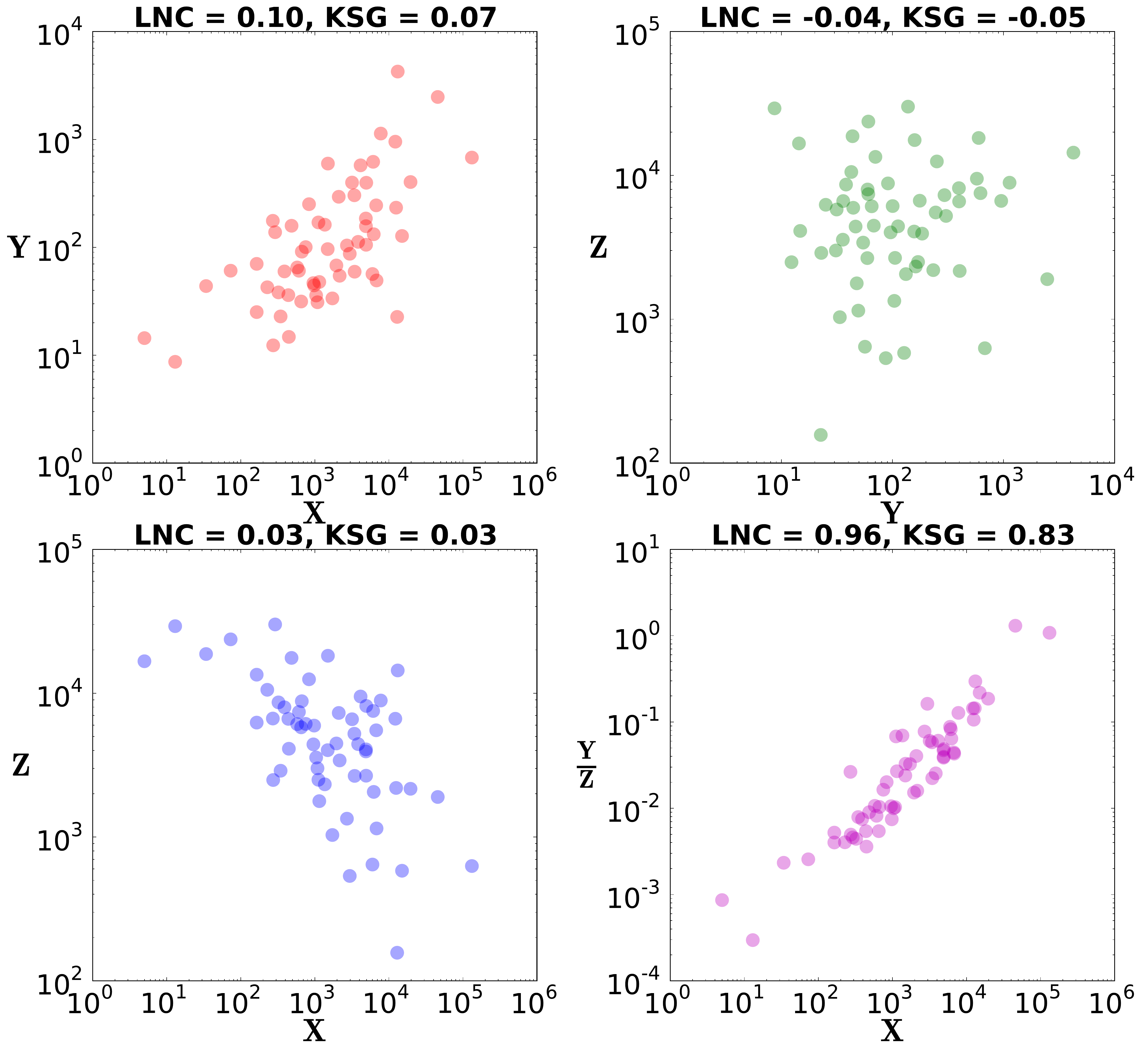} 

	\label{fig1a}
    } 

    \subfigure[]{
    \includegraphics[width=0.74\linewidth]{figs/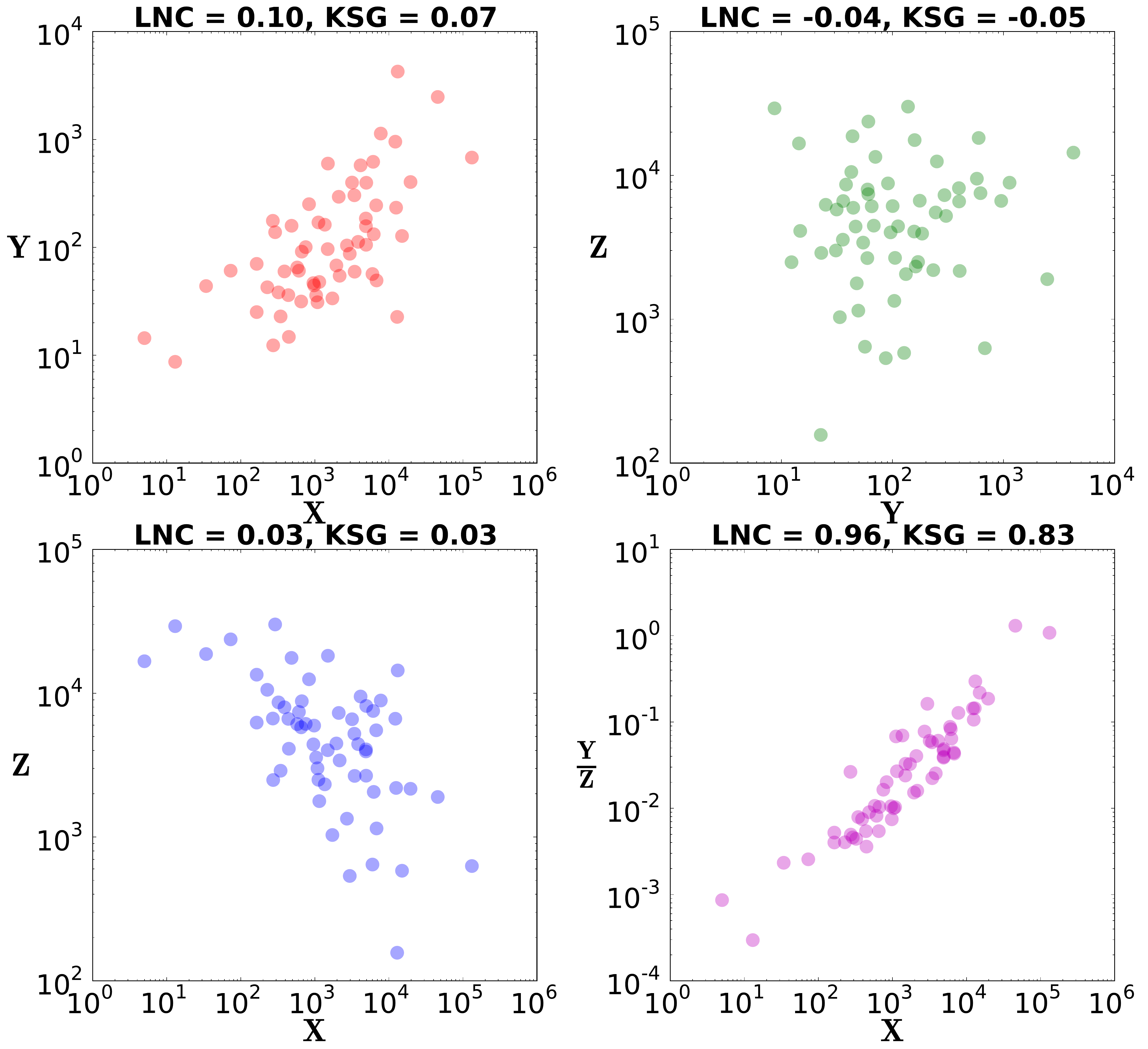}\label{fig1b} 
    }
    \caption{ Two examples  of synergistic triplets: $\hat I_{KSG}(X:Y:Z) = 0.14$ and $\hat I_{LNC}(X:Y:Z) = 0.95$ for the first example; $\hat I_{KSG}(X:Y:Z) = 0.05$ and $\hat I_{LNC}(X:Y:Z) = 0.7$ for the second example}
    \label{fig:synergy}
\end{figure}

For the first relationship, the synergistic behavior can be explained by noting that the ratio of the Total Energy Generation ($Y$) to Electricity Generation per Person  ($Z$) essentially yields the size of the population, which is highly predictive of the Number of Female Cervical Cancer cases  ($X$). While this example might seem somewhat trivial, it illustrates the ability of our method to extract synergistic relationships automatically without any additional assumptions and/or data preprocessing. 

In the second example, LNC predicts a strong synergistic interaction between Total Cell Phones ($Y$),  Number of Female Cervical Cancer cases ($Z$), and rate of Tuberculosis deaths ($X$). Since the variable $Z$ (the number of female cervical cancer cases) grows with the total population,  $\frac{Y}{Z}$ is proportional to the average number of cell phones per person. The last plot indicates that a higher number of cell phones per person are predictive of lower tuberculosis death rate. One possible explanation for this correlation is some common underlying cause (e.g., overall economic development). Another intriguing possibility is that this finding reflects recent efforts to use mobile technology in TB control.\footnote{See {\em Stop TB Partnership}, \url{http://www.stoptb.org}.}

\section{Related Work}\label{sec:related}

\paragraph{MI estimators} There has been a significant amount of recent work on estimating entropic measures such as divergences and mutual information from samples (see this survey~\citep{MI_survey} for an exhaustive list). \citet{khan} compared different MI estimators for varying sample sizes and noise intensity, and reported that for small samples, the KSG estimator was the best choice overall for relatively low noise intensities, while KDE performed better at higher noise intensities. Other approaches include estimators based on Generalized Nearest-Neighbor Graphs~\citep{GNNG}, minimum spanning trees~\citep{spanning}, maximum likelihood density ratio estimation~\citep{densityratio}, and ensemble methods~\citep{Sricharan_ensemble,moon_ensemble_divergence}. In particular, the latter approach works by taking a weighted average of  simple density plug-in estimates such as kNN or KDE. However, it is upper-bounded by the largest value among its simple density estimates. Therefore, this method would still underestimate the mutual information when it goes larger as discussed before.

It has been recognized that kNN-based entropic estimators underestimate probability density at the sample points that are close to the boundary of support~\citep{boundarykNN}. 
\citet{Sricharan_BPI} proposed an bipartite plug-in(BPI) estimator for non-linear density functionals that  extrapolates the density estimates at {\em interior} points that are close to the {\em boundary} points in order to compensate the boundary bias. However, this method requires to identify boundary points and interior points which becomes difficult to distinguish as mutual information gets large that almost all the points are close to the boundary. 
\citet{singh_generalized_2014} used a "mirror image" kernel density estimator to escape the boundary effect, but their estimator relies on the knowledge of the support of the densities, and assumes that the boundary are parallel to the axes. 

\paragraph{Mutual Information and Equitability}
\citet{reshef} introduced a property they called ``suitability'' for a measure of correlation. If two variables are related by a functional form with some noise, equitable measures should reflect the magnitude of the noise while being insensitive to the form of the functional relationship. They used this notion to justify a new correlation measure called MIC. Based on comparisons with MI using the KSG estimator, they concluded that MIC is ``more equitable'' for comparing relationship strengths. While several problems~\citep{simoncomment,gorfinecomment} and alternatives~\citep{hhg,dcor} were pointed out, Kinney and Atwal (KA) were the first to point out that MIC's apparent superiority to MI was actually due to flaws in estimation~\citep{kinney}. A more careful definition of equitability led KA to the conclusion that MI is actually more equitable than MIC.  KA suggest that the poor performance of the KSG estimator that led to Reshef et. al.'s mistaken conclusion could be improved by using more samples for estimation. However, here we showed that the number of samples required for KSG is prohibitively large, but that this difficulty can be overcome by using an improved MI estimator. 




\section{Conclusion}\label{sec:conclusion}
The problem of deciding whether or not two variables are independent is a historically significant endeavor. 
In that context, research on mutual information estimation has been geared towards distinguishing weak dependence from independence. However, modern data mining presents us with problems requiring a totally different perspective. It is not unusual to have thousands of variables which could have millions of potential relationships. We have insufficient resources to examine each potential relationship so we need an assumption-free way to pick out only the most promising relationships for further study. Many applications have this flavor including the health indicator data considered above as well as gene expression microarray data, human behavior data, economic indicators, and sensor data, to name a few. 

How can we select the most interesting relationships to study? Classic correlation measures like the Pearson coefficient bias the results towards linear variables. Mutual information gives a clear and general basis for comparing the strength of otherwise dissimilar variables and relationships. While non-parametric mutual information estimators exist, we showed that strong relationships require exponentially many samples to accurately measure using some of these techniques. 

We introduced a non-parametric mutual information estimator that can measure the strength of nonlinear relationships even with small sample sizes. We have incorporated these novel estimators into an open source entropy estimation toolbox~\footnote{\url{https://github.com/BiuBiuBiLL/MIE}}.
As the amount and variety of available data grows, general methods for identifying strong relationships will become increasingly necessary. We hope that the developments suggested here will help to address this need. 

\subsubsection*{Acknowledgements}
This research was supported in part by DARPA grant No. W911NF--12--1--0034.

\bibliographystyle{plainnat}
{ \small

}

\appendix

\clearpage
\counterwithin{figure}{section}
\counterwithin{equation}{section}
\counterwithin{algorithm}{section}

\section*{Supplementary Material for ``Efficient Estimation of Mutual Information for Strongly Dependent Variables''}

\section{Proof of Theorem~\ref{theo:kNN}}
\label{sec:derive_kNN}
Notice that for a fixed sample point $\vx^\ii$, its $k$-nearest-neighbor distance ${{r_k}\left( {{{\mathbf{x}}^\ii}} \right)}$ is always equal to or larger than the $k$-nearest-neighbor distance of at the same point $\vx^\ii$ projected into a sub-dimension $j$, i.e., for any $i,j$, we have
\be \label{eq:kNN_dist}
{r_k}\left( {{{\mathbf{x}}^\ii}} \right) \geqslant {r_k}\left( {{{\mathbf{x}}^\ii_j}} \right)
\ee

Using Eq.~\ref{eq:kNN_dist}, we get the upper bound of ${\widehat I_{kNN,k}}\left( {\mathbf{x}} \right)$ as follows:
\BEA \label{eq:kNN_eq1}
  {\widehat I_{kNN,k}}\left( {\mathbf{x}} \right) &=& \widehat I{'_{kNN,k}}\left( {\mathbf{x}} \right) - \left( {d - 1} \right){\gamma _k} \nonumber \\
   &=& \frac{1}{N}\sum\limits_{i = 1}^N {\log \frac{{{{\widehat p}_k}\left( {{{\mathbf{x}}^\ii}} \right)}}{{\prod\limits_{j = 1}^d {{{\widehat p}_k}\left( {{{\mathbf{x}}^\ii_j}} \right)} }} - \left( {d - 1} \right){\gamma _k}}  \nonumber \\
   &=& \frac{1}{N}\sum\limits_{i = 1}^N {\log \frac{{\frac{k}{{N - 1}}\frac{{\Gamma \left( {d/2} \right) + 1}}{{{\pi ^{d/2}}}}{r_k}{{\left( {{{\mathbf{x}}^\ii}} \right)}^{ - d}}}}{{\prod\limits_{j = 1}^d {\frac{k}{{N - 1}}\frac{{\Gamma \left( {1/2} \right) + 1}}{{{\pi ^{1/2}}}}{r_k}{{\left( {{{\mathbf{x}}^\ii_j}} \right)}^{ - 1}}} }}}\nonumber \\
  &~&- \left( {d - 1} \right){\gamma _k} \nonumber\\
   &\le& \left( {d - 1} \right)\log \left( {\frac{{N - 1}}{k}} \right) + \log \frac{{\Gamma \left( {d/2} \right) + 1}}{{{{\left( {\Gamma \left( {1/2} \right) + 1} \right)}^d}}} \nonumber \\
   &~&- \left( {d - 1} \right)\left( {\psi \left( k \right) - \log k} \right)  \nonumber \\
 &\le& \left( {d - 1} \right)\log \left( {\frac{{N - 1}}{k}} \right) + \log \frac{{\Gamma \left( {d/2} \right) + 1}}{{{{\left( {\Gamma \left( {1/2} \right) + 1} \right)}^d}}} \nonumber \\
 &~&- \left( {d - 1} \right)\left( {\psi \left( 1 \right) - \log 1} \right) \nonumber \\
\EEA
The last inequality is obtained by noticing that $\psi(k)-\log(k)$ is a monotonous decreasing function.

Also, we have,
\BEA \label{eq:kNN_eq2}
  \log \frac{{\Gamma \left( {d/2} \right) + 1}}{{{{\left( {\Gamma \left( {1/2} \right) + 1} \right)}^d}}} &=& \log \left( {\Gamma \left( {d/2} \right) + 1} \right) - d\log \left( {\Gamma \left( {d/2} \right) + 1} \right) \nonumber \\
   &<& \log \left( {\sqrt {2\pi } {{\left( {\frac{{d/2 + 1/2}}{e}} \right)}^{d/2 + 1/2}}} \right) \nonumber \\
   &~&- d\log \left( {{\pi ^{\frac{1}{2}}} + 1} \right) \nonumber \\
   &=& O\left( {d\log d} \right) \nonumber
\EEA

The inequality above is obtained by using the bound of gamma function that,
\[\Gamma \left( {x + 1} \right) < \sqrt {2\pi } {\left( {\frac{{x + 1/2}}{e}} \right)^{x + 1/2}}\]
Therefore, reconsidering \ref{eq:kNN_eq1}, we get the following inequality for ${\widehat I_{kNN,k}}\left( {\mathbf{x}} \right)$:
\be
{\widehat I_{kNN,k}}\left( {\mathbf{x}} \right) &\le& \left( {d - 1} \right)\log \left( {\frac{{N - 1}}{k}} \right) + O\left( {d\log d} \right) \nonumber \\
&\le& \left( {d - 1} \right)\log \left( {N - 1} \right) + O\left( {d\log d} \right)
\ee

Requiring that $|{\widehat I_{kNN,k}}\left( {\mathbf{x}} \right) - I(\vx)| \le \varepsilon$, we obtain,
\be
N \ge C \exp \left( {\frac{{I\left( {\mathbf{x}} \right) - \epsilon }}{{d - 1}}} \right) + 1
\ee 
where $C$ is a constant which scales like $O(\frac{1}{d})$.

\section{Derivation of Eq.~\ref{eq:lnc}}
\label{sec:derive_lnc}
The naive kNN or KSG estimator can be written as:
\be\label{eq:ksg_eq_2}
{\widehat I_{k}}\left( {\mathbf{x}} \right) = \frac{1}{N}\sum\limits_{i = 1}^N {\log \frac{{\frac{{{P}\left( {{{\mathbf{x}}^\ii}} \right)}}{{V\left( i \right)}}}}{{\prod\limits_{j = 1}^d {\frac{{{P}\left( {{{\mathbf{x}}^\ii_j}} \right)}}{{{V_j}\left( i \right)}}} }}} 
\ee
where $P(\vx^\ii)$ is the probability mass around the $k$-nearest-neighborhood at $\vx^\ii$ and $P(\vx^\ii_j)$ is the probability mass around the $k$-nearest-neighborhood (or $n_{x_j}(i)$-nearest-neighborhood for KSG) at $\vx^\ii$ projected into $j$-th dimension.  Also, $V(i)$ and $V_j(i)$ denote the volume of the kNN ball(or hype-rectangle in KSG) in the joint space and projected subspaces respectively.

Now our local nonuniform correction method replaces the volume $V(i)$ in Eq.~\ref{eq:ksg_eq_2} with the corrected volume $\overline{V}(i)$, thus, our estimator is obtained as follows:
\BEA
  {\widehat I_{LNC,k}}\left( {\mathbf{x}} \right) &=& \frac{1}{N}\sum\limits_{i = 1}^N {\log \frac{{\frac{{{P}\left( {{{\mathbf{x}}^\ii}} \right)}}{{\overline V \left( i \right)}}}}{{\prod\limits_{j = 1}^d {\frac{{{P}\left( {{{\mathbf{x}}^\ii_j}} \right)}}{{{V_j}\left( i \right)}}} }}}  \nonumber \\
   &=& \frac{1}{N}\sum\limits_{i = 1}^N {\log \frac{{\frac{{{P}\left( {{{\mathbf{x}}^\ii}} \right)}}{{V\left( i \right)}} \times \frac{{V\left( i \right)}}{{\overline V \left( i \right)}}}}{{\prod\limits_{j = 1}^d {\frac{{{P}\left( {{{\mathbf{x}}^\ii_j}} \right)}}{{{V_j}\left( i \right)}}} }}} \nonumber \\
    &=& {\widehat I_{k}}\left( {\mathbf{x}} \right) - \frac{1}{N}\sum\limits_{i = 1}^N {\log \frac{{\overline V \left( i \right)}}{{V\left( i \right)}}} \nonumber \\ 
\EEA

\section{Empirical Evaluation for $\alpha_{k,d}$}
\label{sec:derive_alpha}

Suppose we have a uniform distribution on the $d$ dimensional (hyper)rectangle with volume $V$.  We sample $k$ points from this uniform distribution. We perform PCA using these $k$ points to get a new basis. After rotating into this new basis, we find the volume, $\bar V$, of the smallest rectilinear rectangle containing the points. 
By chance, we will typically find $\bar V < V$, even though the distribution is uniform. This will lead to us to (incorrectly) apply a local non-uniformity correction. Instead, we set a threshold $\alpha_{k,d}$ and if $\bar V/V$ is above the threshold, we assume that the distribution is locally uniform. Setting $\alpha$ involves a trade-off. If it is set too high, we will incorrectly conclude there is local non-uniformity and therefore over-estimate the mutual information. If we set $\alpha$ too low, we will lose statistical power for ``medium-strength'' relationships (though very strong relationships will still lead to values of $\bar V/V$ smaller than $\alpha$). 

In practice, we determine the correct value of $\alpha_{k,d}$ empirically. 
We look at the probability distribution of $\bar V/V$ that occurs when the true distribution is uniform. 
We set $\alpha$ conservatively so that when the true distribution is uniform, our criteria rejects this hypothesis with small probability, $\epsilon$.  
Specifically, we do a number of trials, $N$, and set $\hat \alpha_{k,d}$ such that $\sum\limits_{i = 1}^N {{\bf{I}}\left( {\frac{{\bar {{V_i}}}}{{{V_i}}} < {{\hat \alpha }_{k,d}}} \right)} /N < \epsilon $ where $\epsilon$ is a relatively small value. In practice, we chose $\epsilon = 5\times10^{-3}$ and $N = 5\times{10^5}$.  The following algorithm describes this procedure:

\begin{algorithm}[h]
\caption{\textbf{Estimating $\alpha_{k,d}$ for LNC}}
\begin{algorithmic}
\label{alg:alpha}
\State{\textbf{Input:} parameter $d$ (dimension), $k$ (nearest neighbor), $N$, $\epsilon$}
\State{\textbf{Output:} $\hat \alpha_{k,d}$}

\State{set array $\bf{a}$ to be NULL}
\Repeat
	\State{Randomly choose a uniform distribution supported on $d$ dimensional (hyper) rectangle, denote its volume to be $V$}
	\State{Draw $k$ points from this uniform distribution, get the correcting volume $\bar V$ after doing PCA}
	\State{add the ratio $\frac{\bar V}{V}$ to array $\bf{a}$}
\Until above procedure repeated $N$ times

\State{$\hat \alpha_{k,d} \leftarrow \left\lceil\epsilon N\right\rceil th$ smallest number in $\bf{a}$}
\end{algorithmic}
\end{algorithm}

\begin{figure}[ht]
	\centering
	\includegraphics[width=1.0\linewidth]{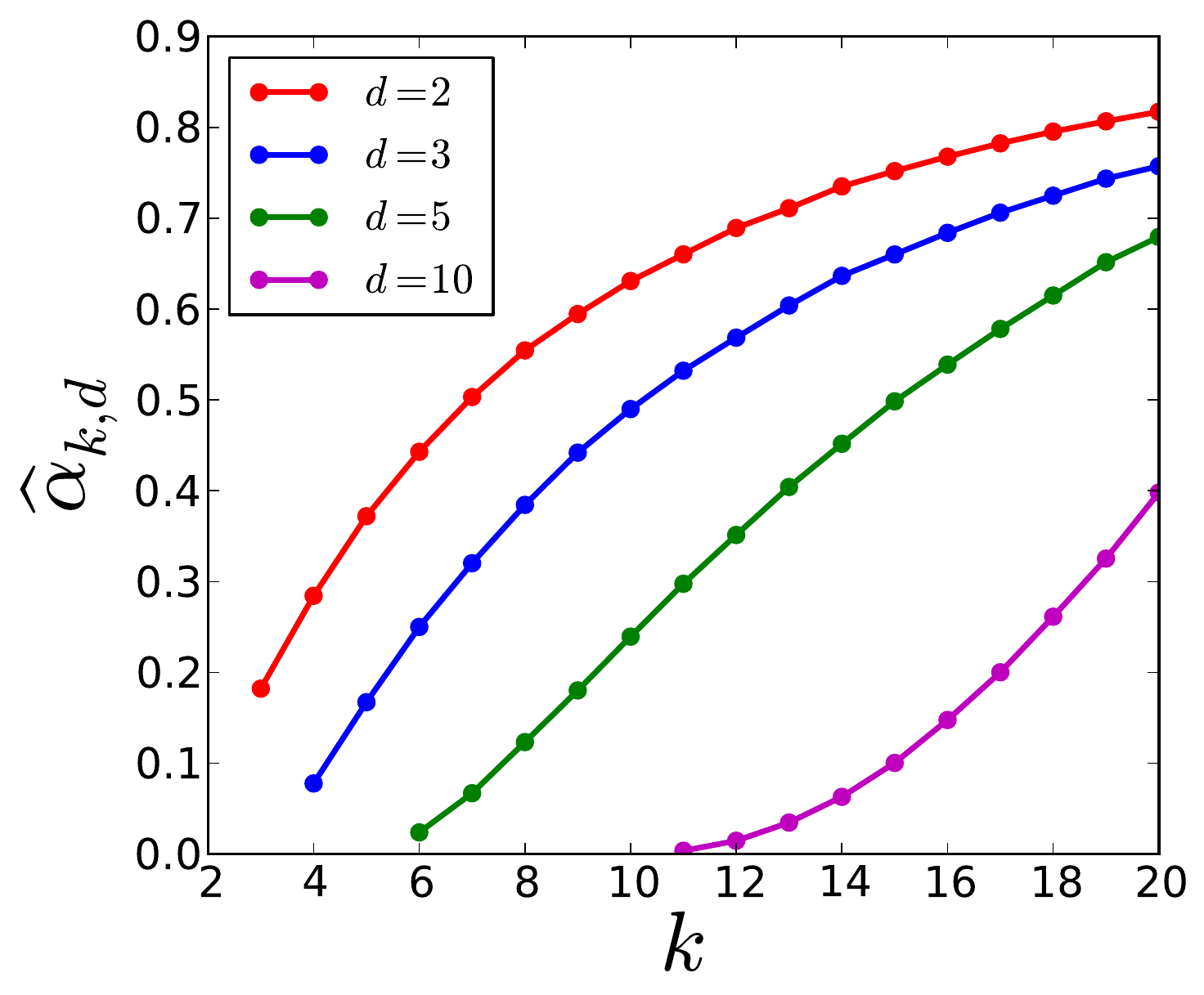}
	\caption{$\widehat \alpha_{k,d}$ as a function of $k$. $k$ ranges over $[d,20]$ for each dimension $d$.}
	\label{fig:alpha}
\end{figure}

Figure~\ref{fig:alpha} shows empirical value of $\hat \alpha_{k,d}$ for different $(k,d)$ pairs. We can see that for a fixed dimension $d$, $\hat \alpha_{k,d}$ grows as $k$ increases, meaning that $\bar V$ must be closer to $V$ to accept the null hypothesis of \textit{uniformity}.
We also find that $\hat \alpha_{k,d}$ decreases as the dimension $d$ increases, indicating that for a fixed $k$, $\bar V$ becomes much smaller than $V$ when points are drawn from a uniform distribution in higher dimensions.

\section{More Functional Relationship Tests in Two Dimensions}\label{sec:more}
We have tested together twenty-one functional relationships described in ~\citet{reshef, kinney}, we show six of them in Section~\ref{sec:results}. The complete results are shown in Figure~\ref{fig:all_functions}. Detailed description of the functions can be found in Table S1 of Supporting Information in~\citet{kinney}. 

\begin{sidewaysfigure*}[ht]
    \includegraphics[width=1.0 \textwidth]{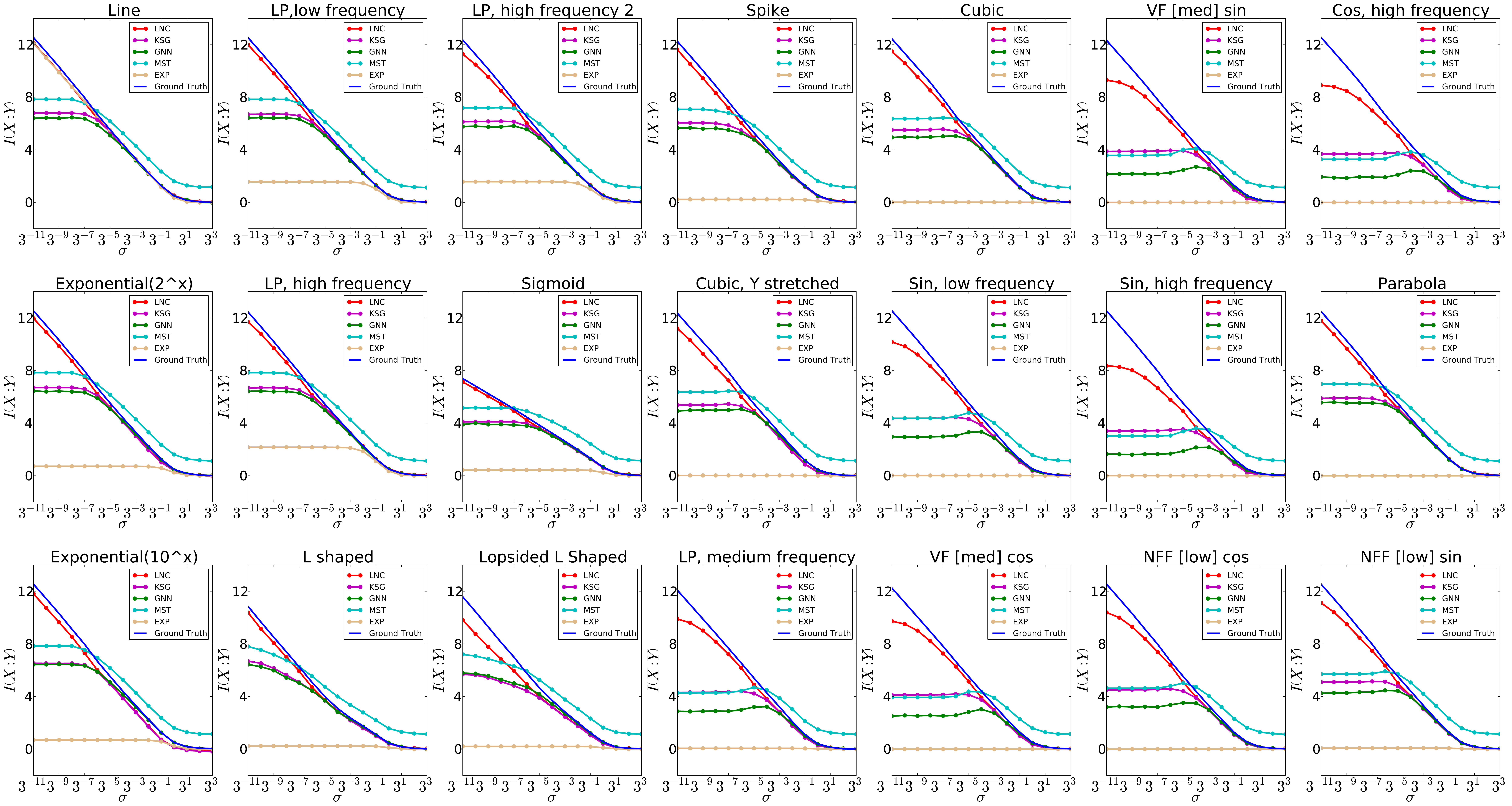}
    \caption{Mutual Information tests of $LNC$, $KSG$, $GNN$, $MST$, $EXP$ estimators. Twenty-one functional relationships with different noise intensities are tested. Noise has the form $U[-\sigma/2,\sigma/2]$where $\sigma$ varies(as shown in X axis of the plots). For KSG, GNN and LNC estimators, nearest neighbor parameter $k=5$. We are using $N = 5,000$ data points for each noisy functional relationship.}
    \label{fig:all_functions}
\end{sidewaysfigure*}

\end{document}